\def\fullversion{1}
\def\draft{0}

\ifnum\fullversion=1
  \newcommand{\apref}[1]{Appendix~\ref{#1}}
\else
  \newcommand{\apref}[1]{the full version of the paper}
\fi

\ifnum\draft=1
  \documentclass[a4paper,envcountsame,runningheads,draft]{llncs}
\else
  \documentclass[a4paper,envcountsame,runningheads]{llncs}
\fi

\bibliographystyle{splncs03}

\usepackage[utf8]{inputenc}
\usepackage[T1]{fontenc}
\usepackage[english]{babel}
\usepackage{amsmath,amssymb,amsfonts,latexsym}
\usepackage{newalg}
\usepackage{pgfplots}
\usepackage{graphicx,tikz}
\usetikzlibrary{decorations.shapes, calc}
\usepackage{newalg}

\spnewtheorem*{remark*}{Remark}{\bfseries}{\itshape}

\newcommand{\OPT}{\ensuremath{\mathrm{OPT}}}
\newcommand{\VAL}{\ensuremath{\mathrm{VAL}}}
\newcommand{\PERM}{\ensuremath{\mathrm{PERM}}}
\newcommand{\AVG}{\ensuremath{\mathrm{AVG}}}

\newcommand{\child}[1]{\mathsf{Ch}(#1)}
\newcommand{\BEAS}{\begin{eqnarray*}}
\newcommand{\EEAS}{\end{eqnarray*}}
\newcommand{\BEA}{\begin{eqnarray}}
\newcommand{\EEA}{\end{eqnarray}}
\newcommand{\BEQ}{\begin{equation}}
\newcommand{\EEQ}{\end{equation}}
\newcommand{\BIT}{\begin{itemize}}
\newcommand{\EIT}{\end{itemize}}
\newcommand{\BNUM}{\begin{enumerate}}
\newcommand{\ENUM}{\end{enumerate}}

\let\doendproof\endproof
\renewcommand\endproof{~\hfill\qed\doendproof}

\title{Sublinear-Time Algorithms for Monomer-Dimer Systems on Bounded Degree Graphs
\texorpdfstring{\ifnum\fullversion=0
    \protect\footnote{Full version available at {\url{http://arxiv.org/abs/1208.3629}}}
\fi}{}
}
\titlerunning{Sublinear-Time Algorithms for Monomer-Dimer Systems}
\author{Marc Lelarge \and Hang Zhou}
\institute{INRIA, \'Ecole Normale Sup\'erieure, France\\\email{\{marc.lelarge,hang.zhou\}@ens.fr}}

\date{}

\usepackage[pdfpagemode=UseNone]{hyperref}
\hypersetup{
     colorlinks=false,
     pdfborder={0 0 0}
}

\begin{document}


\maketitle

\begin{abstract}
For a graph $G$, let $Z(G,\lambda)$ be the partition function of the monomer-dimer system defined by $\sum_k m_k(G)\lambda^k$, where $m_k(G)$ is the number of matchings of size $k$ in $G$. We consider graphs of bounded degree and develop a sublinear-time algorithm for estimating $\log Z(G,\lambda)$ at an arbitrary value $\lambda>0$ within additive error $\epsilon n$ with high probability. The query complexity of our algorithm does not depend on the size of $G$ and is polynomial in $1/\epsilon$, and we also provide a lower bound quadratic in $1/\epsilon$ for this problem. This is the first analysis of a sublinear-time approximation algorithm for a $\# P$-complete problem. Our approach is based on the correlation decay of the Gibbs distribution associated with $Z(G,\lambda)$. We show that our algorithm approximates the probability for a vertex to be covered by a matching, sampled according to this Gibbs distribution, in a near-optimal sublinear time. We extend our results to approximate the average size and the entropy of such a matching within an additive error with high probability, where again the query complexity is polynomial in $1/\epsilon$ and the lower bound is quadratic in $1/\epsilon$. Our algorithms are simple to implement and of practical use when dealing with massive datasets. Our results extend to other systems where the correlation decay is known to hold as for the independent set problem up to the critical activity.
\end{abstract}

\section{Introduction}

The area of sublinear-time algorithms is an emerging area of computer
science which has its root in the study of massive data
sets \cite{Czumaj06sublinear-timealgorithms,MR2275720}.
Internet, social networks or communication networks are
typical examples of graphs with potentially millions of vertices
representing agents, and edges representing possible interactions among
those agents. In this paper, we present sublinear-time algorithms for
graph problems. We are concerned more with problems of counting and
statistical inference and less with optimization. For example, in a mobile call
graphs, phone calls can be represented as a matching of the graph
where each edge has an activity associated to the intensity of the
interactions between the pair of users. Given such a graphs, with
local activities on edges, we
would like to answer questions like: what is the size of a typical
matching? for a given user what is the probability of being matched?
As another example, models of statistical physics have been proposed
to model social interactions. In particular, spin systems are a general framework
for modeling nearest-neighbor interactions on graphs. In this setting,
the activity associated to each edge allows to model a perturbed
best-response dynamics \cite{blu93}. Again in this setting, it is
interesting to compute estimations for the number of
agents playing a given strategy or the probability for an agent
in the graph to play a given strategy at equilibrium.

There are now quite a few results on sublinear-time approximation algorithms
for graph optimization problems: minimum spanning tree weight~\cite{chazelle2005approximating},
minimum set cover~\cite{nguyen2008constant},
maximum matching~\cite{nguyen2008constant,yoshida2009improved} and
minimum vertex cover~\cite{nguyen2008constant,onak2012near,parron07}.
There are also a couple of works on sublinear-time algorithms for statistical and counting problems, e.g., approximating the average degree of a graph~\cite{feige2006sums,goldreich2008approximating} and approximating the number of occurrences of a certain structure (such as a star) in a graph~\cite{gonen2011counting}.
Our focus in this paper is on the algorithmic problems arising in statistical physics and classical
combinatorics \cite{wel93}.
We now present the monomer-dimer problem which will be the main focus
of our paper.

Let $G=(V,E)$ be an undirected graph with $|V|=n$ vertices and $|E|=m$
edges, where we allow $G$ to contain parallel edges and self-loops.
We denote by $N(G,v)$ the set of neighbors of $v$ in $G$. We consider
bounded degree graphs with $\max_v |N(G,v)|\leq \Delta$.
In a monomer-dimer system, the vertices are covered by non-overlapping
arrangement of monomers (molecules occupying one vertex of $G$) and
dimers (molecules occupying two adjacent vertices of $G$)
\cite{hl72}.
It is convenient to identify monomer-dimer arrangements with matchings;
a matching in $G$ is a subset $M\subset E$ such that no two edges in
$M$ share an endpoint.
Thus, a matching of cardinality $|M|=k$ corresponds exactly to
a monomer-dimer arrangement with $k$ dimers and $n-2k$ monomers.
Let $\mathbb{M}$ be the set of matchings of $G$. To each matching $M$,
a weight $\lambda^{|M|}$ is assigned, where $\lambda> 0$ is called
the activity. The partition function of the system is defined by
$Z(G,\lambda) = \sum_{M\in \mathbb{M}} \lambda^{|M|}$, and the Gibbs
distribution on the space $\mathbb{M}$ is defined by $\pi_{G,\lambda}
(M) =\frac{\lambda^{|M|}}{Z(G,\lambda)}$. The function $Z(G,\lambda)$ is also of combinatorial interest and called the {\em matching
polynomial} in this context~\cite{lp09}. For example, $Z(G,1)$ enumerates all matchings
in $G$.
From an algorithmic viewpoint, no feasible method is known for computing $Z(G,\lambda)$ exactly for general
monomer-dimes system; indeed, for any fixed value of $\lambda >0$, the
problem of computing $Z(G,\lambda)$ exactly
in a graph of bounded degree $\Delta$ is
complete for the class \#P of enumeration problems, when $\Delta\geq 5$ (see~\cite{vadhan2002complexity}).
The focus on computing $Z(G,\lambda)$ shifted to finding approximate solutions in polynomial time.
For example, the Markov Chain Monte Carlo (MCMC) method yields a provably efficient algorithm for finding an approximate solution.
Based on the equivalence between the counting problem (computing $Z(G,\lambda)$) and the sampling problem (according to $\pi_{G,\lambda}$) \cite{jvv86}, this approach focuses on rapidly mixing Markov chains to obtain appropriate random samples.
A Fully Polynomial-time Randomized Approximation Scheme (FPRAS) for computing the total number of matchings based on MCMC was provided by Jerrum and Sinclair \cite{jerbook,sinbook}.

Another related problem in the monomer-dimer system is the average size of a matching sampled
according to $\pi_{G,\lambda}$, defined by $E(G,\lambda)= \sum_{M\in \mathbb{M}}|M|\;\pi_{G,\lambda}(M)$.
Sinclair and Srivastava recently proved in~\cite{sinclair2013lee} that for any fixed value of $\lambda>0$, the problem of computing $E(G,\lambda)$ exactly in a bounded degree graph (allowing parallel edges) is \#P-hard, for any maximum degree $\Delta\geq5$.
Thus again we are interested in finding approximate solutions to this problem.

In order to study sublinear-time approximation algorithms for these problems, we use
the approach based on the concept of correlation decay
originating in statistical physics \cite{mmbook} and which has been
used to get a deterministic approximation scheme for counting
matchings in polynomial time~\cite{bayati2007simple}.
It follows already from \cite{hl72} that the marginals of the probability distribution
$\pi_{G,\lambda}$ are local in nature: the local structure of the
graph around a vertex $v$ allows to compute an approximation of the
corresponding marginal. In the computer science literature, this
property follows from the so-called \emph{correlation decay property}.
Our algorithm is then simple to understand: we
need only to sample a fixed number of vertices, approximate the marginals
associated to these vertices locally and then from these values output
an estimate for the desired quantity.
The correlation decay property also holds for other systems such as the independent set problem~\cite{weitz2006counting}, the coloring problem~\cite{gamarnik2007correlation}, and  the two-state spin system~\cite{liluyin12,DBLP:conf/soda/LiLY13,sinclair2012approximation}.
In \apref{sec:other}, we extend our technique to the independent set problem. We believe that similar extensions can be done for other systems as soon as the correlation decay property holds.

A graph $G$ is represented by two kinds of oracles $\mathcal{D}$ and $\mathcal{N}$
such that $\mathcal{D}(v)$ returns the degree of $v\in V$ and $\mathcal{N}(v,i)$ returns the $i^{\rm th}$ (with $1\leq i\leq
\mathcal{D}(v)$) neighbor of $v\in V$. The efficiency of an algorithm is
measured by its query complexity, i.e. the total number of accesses to
$\mathcal{D}$ and $\mathcal{N}$.
Let $\VAL$ denote a real value associated with the graph.
We say that $\widehat{\VAL}$ is an \emph{$\epsilon$-approximation} of $\VAL$ if $\widehat{\VAL}-\epsilon \leq \VAL \leq \widehat{\VAL}+\epsilon$, where $\epsilon>0$ is specified as an input parameter.
An algorithm is called an \emph{$\epsilon$-approximation algorithm} for
$\VAL$ if for any graph $G$, it computes an $\epsilon$-approximation of $\VAL$ with high probability (e.g., at least $\frac{2}{3}$).
In our model, we consider the case of
constant maximum degree $\Delta$ as $\epsilon$ tends to zero, i.e., we always first take the limit as $\epsilon\to0$ and then the limit $\Delta\to\infty$.

Our main contribution (Theorem~\ref{partition}) is an $\epsilon n$-approximation algorithm for $\log Z(G,\lambda)$ in a graph $G$ of bounded degree $\Delta$.
The query complexity of the algorithm is $\tilde{O}\left((1/\epsilon)^{\tilde{O}(\sqrt{\Delta})}\right)$, which does not depend on the size of the graph.
From the relation between the partition function and the matching statistics, we then obtain $\epsilon n$-approximation algorithms for the average size of a matching and the entropy of $\pi_{G,\lambda}$ with the same query complexity as before.
We also provide the $\Omega(1/\epsilon^2)$ query lower bound for $\epsilon n$-approximation algorithms for $\log Z(G,\lambda)$ and the other two problems.

The main tool of the above algorithms is the approximation of the marginal $p_{G,\lambda}(v)$, which is the probability that the vertex $v$ is not covered by a matching under the Gibbs distribution. We estimate $p_{G,\lambda}(v)$ for an arbitrary
vertex $v\in V$ within an error of $\epsilon>0$ with near-optimal
query complexity $\tilde{O}\left((1/\epsilon)^{\tilde{O}(\sqrt{\Delta})}\right)$.

The rest of the paper is organized as follows.
In Section~\ref{sec:loc}, we prove our first main result concerning local computations for matchings.
Based on this result, we construct an $\epsilon n$-approximation algorithm for the partition function $Z(G,\lambda)$ in Section~\ref{sec:partition function} and $\epsilon n$-approximation algorithms for the average size of a matching and the entropy of $\pi_{G,\lambda}$ in Section~\ref{sec:matching statistics}.
We also provide query lower bounds in these two sections.
In Section~\ref{sec:applications}, we give some applications of our technique for approximating the permanent of constant degree expander graphs and the size of a maximum matching (in this last case, our algorithm is outperformed by \cite{yoshida2009improved}).
In \apref{sec:tests}, we also show the efficiency of our algorithms by testing on large real-world networks.

\section{Local Computations for Matchings}\label{sec:loc}

Recall that we defined for all $\lambda>0$, the Gibbs distribution on
matchings of a graph $G$ by:
\BEAS
\forall M\in \mathbb{M}, \quad \pi_{G,\lambda}(M) =
\frac{\lambda^{|M|}}{Z(G,\lambda)}\mbox{ where } Z(G,\lambda) = \sum_{M\in \mathbb{M}} \lambda^{|M|}.
\EEAS
The focus in this section is on the approximation of the probability that a vertex $v\in V$ is not coverd by a matching:
\[p_{G,\lambda}(v)  := \sum_{M \not\ni v}\pi_{G,\lambda} (M),\]
where $M \not\ni v$ is a matching not covering $v$.

First notice that
\BEA
\label{eq:pi}p_{G,\lambda}(v) = \frac{Z(G\backslash \{v\},\lambda)}{Z(G,\lambda)},
\EEA
where $G\backslash \{v\}$ is the graph obtained from $G$ by removing
the vertex $v$ and all incident edges.
Then we have
\BEAS
Z(G,\lambda) = Z(G\backslash \{v\},\lambda)+\lambda \sum_{u\in
  N(G,v)}Z(G\backslash \{u,v\},\lambda),
\EEAS
so that dividing by $Z(G\backslash \{v\},\lambda)$, we get
\BEA
\label{eq:recpi}p_{G,\lambda}(v) =\frac{1}{1+\lambda \sum_{u\in N(G,v)} p_{G\backslash \{v\},\lambda}(u)}.
\EEA
This recursive expression for $p_{G,\lambda}(v)$ is well-known
and allows to compute the marginal $p_{G,\lambda}(v)$ exactly for each
$v\in V$. We follow the approach of Godsil \cite{god81}.
First, we recall the notion of \textit{path-tree} associated with a rooted graph:
if $G$ is any rooted graph with root $v_0$, we define
its path-tree $T_G(v_0)$ as the rooted tree whose vertex-set consists of
all \emph{finite simple paths} starting at the root $v_0$; whose edges are the pairs $\{P,P'\}$ of the
  form $P=v_0\ldots v_k$, $P'=v_0\ldots v_k v_{k+1} (k\geq 0)$; and whose
  root is the single-vertex path $v_0$.
By a \textit{finite simple path}, we mean here a finite sequence of
distinct vertices $v_0\ldots v_k$ ($k\geq 0$) such that  $v_i v_{i+1}\in E$
for all $0\leq i < k$. Note that the notion of path-tree is similar to
the more standard notion of computation tree, the main difference
being that for any finite graph $G$, the path-tree is always finite
(although its size might be much larger than the size of the original
graph $G$).

For every node $u$ in the path-tree $T_G(v)$, define $\child{u}$ to be the set of children of $u$ in $T_G(v)$.
The recursion (\ref{eq:recpi}) easily implies
$p_{G,\lambda}(v) = p_{T_G(v),\lambda}(v)$ and
$p_{T_G(v),\lambda}(v)=x_v(v)$, where the vector ${\bf x}(v)=(x_u(v),\:u\in T_G(v))$
solves the recursion:
\BEA
\label{eq:recx}\forall u\in T_G(v),\quad  x_u(v)=\frac {1}{1+\lambda \sum_{w\in \child{u}} x_w(v)}
\EEA
(by convention a sum over the empty set is zero).

In order to approximate $p_{G,\lambda}(v)$, we will show that it suffices to solve
the recursion (\ref{eq:recx}) restricted to a truncated path-tree of $T_G(v)$.
For any $h\geq 1$, let $T^h_G(v)$ be the path-tree truncated at depth
$h$ and let ${\bf x}^h(v) = (x_u^h(v),\:u\in T^h_G(v))$ be the solution of the recursion (\ref{eq:recx})
when the path-tree is replaced by the truncated version
$T^h_G(v)$. Clearly $x_v^h(v)= p_{G,\lambda}(v)$ for any $h\geq n$ and
the following lemma gives a quantitative estimate on how large $h$
needs to be in order to get an $\epsilon$-approximation of $p_{G,\lambda}(v)$.

\begin{lemma}\label{lem:bayati}
There exists $\overline{h}(\epsilon,\Delta)$ such that $|\log x_v^{h}(v) -\log p_{G,\lambda}(v)|\leq\epsilon$ for any $h\geq \overline{h}(\epsilon,\Delta)$. Moreover $\overline{h}(\epsilon,\Delta)=\tilde{O}\left(\sqrt{\Delta}\log(1/\epsilon)\right)$ and satisfies
\[\displaystyle\lim_{\Delta\to \infty}\frac{1}{\sqrt{\Delta}} \lim_{\epsilon\to 0}
\frac{\overline{h}(\epsilon,\Delta)}{\log (1/\epsilon)} = \sqrt{\lambda}.\]
\end{lemma}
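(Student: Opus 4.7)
The starting point is the identity $p_{G,\lambda}(v) = x_v(v)$, where $x_v(v)$ solves recursion~(\ref{eq:recx}) on the full path-tree $T_G(v)$; the truncated quantity $x_v^h(v)$ solves the same recursion on $T_G^h(v)$, equivalently with the boundary condition $x_u^h = 1$ imposed at depth exactly $h$ (since a childless node takes value $1$). The plan is to quantify how such a boundary perturbation at depth $h$ dampens as the recursion runs back to the root. In logarithmic coordinates the one-step map $\Phi((x_w)_w) = 1/(1 + \lambda \sum_w x_w)$ has Jacobian entries $|\partial \log \Phi/\partial \log x_w| = \lambda x_w \Phi$, so the mean value theorem along every root-to-leaf path of $T_G(v)$ yields
\[
\bigl|\log x_v(v) - \log x_v^h(v)\bigr| \;\leq\; \log(1+\lambda\Delta)\sum_{P}\prod_{k=0}^{h-1} \lambda\,\widetilde{x}_{P_{k+1}}\,\widetilde{x}_{P_k},
\]
where $P$ ranges over length-$h$ simple paths from $v$, the $\widetilde{x}$ are intermediates in $[1/(1+\lambda\Delta),1]$ produced by the MVT, and the prefactor bounds the boundary error $|\log x_u|$ at depth $h$.

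To control the sum-of-products I would exploit the key identity $\sum_{w\in\child{u}} \lambda x_w x_u = 1 - x_u$, which follows immediately from (\ref{eq:recx}). A single-level contraction by $1-x_u$ is insufficient when $x_u$ sits near the fixed-point value $\Theta(1/\sqrt{\lambda\Delta})$ of the $\Delta$-regular recursion, so the plan is to pair two consecutive levels. Applying the identity to the inner sum and then Cauchy--Schwarz to $\sum_w x_w^2$ gives the two-step bound
\[
\sum_{w\in\child{u}} \lambda x_w x_u \,(1 - x_w) \;\leq\; (1 - x_u)\Big(1 - \tfrac{1-x_u}{x_u\,\lambda\,\Delta}\Big).
\]
Elementary optimization over $x_u\in[1/(1+\lambda\Delta),1]$ shows that the right-hand side is maximized at $x_u = 1/\sqrt{1+\lambda\Delta}$, yielding a uniform two-level contraction factor $\kappa(\lambda,\Delta) = 1 - 2/\sqrt{\lambda\Delta} + o(1/\sqrt{\lambda\Delta})$ as $\lambda\Delta\to\infty$.

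Combining these, the overall error is at most $\log(1+\lambda\Delta)\,\kappa^{\lfloor h/2\rfloor}$. Solving for $\epsilon$ gives $\overline{h}(\epsilon,\Delta) = \sqrt{\lambda\Delta}\,\log(1/\epsilon)\,(1+o(1)) + O\bigl(\sqrt{\lambda\Delta}\,\log\log(1+\lambda\Delta)\bigr) = \widetilde{O}\bigl(\sqrt{\Delta}\,\log(1/\epsilon)\bigr)$, and taking $\epsilon \to 0$ first then $\Delta \to \infty$ recovers the stated asymptotic constant $\sqrt{\lambda}$. The main obstacle is exactly the two-level step: a naive per-level analysis delivers only $1-x_u$, which degenerates precisely in the regime $\lambda\Delta\gg 1$ where the marginals are of order $1/\sqrt{\lambda\Delta}$. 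Cauchy--Schwarz is what converts ``$x_u$ small'' into ``$\sum_w x_w^2$ not too small'', trading one level's slack against the next; the optimizer $x_u = 1/\sqrt{1+\lambda\Delta}$ coincides with the regular-tree fixed point, so the argument is tight and yields the matching $\sqrt{\lambda}$ constant claimed in the lemma. A secondary care point is verifying that the uniform bound $\widetilde{x}_u\in[1/(1+\lambda\Delta),1]$ applies to the MVT intermediates, which follows from the fact that $\Phi$ maps this interval into itself.
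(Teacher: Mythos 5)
Your proposal is correct, but it takes a more self-contained route than the paper: the paper's proof of this lemma is a one-liner that imports Theorem~3.2 of \cite{bayati2007simple}, namely the bound $|\log x_v^h(v)-\log p_{G,\lambda}(v)|\leq\bigl(1-\frac{2}{\sqrt{1+\lambda\Delta}+1}\bigr)^{h/2}\log(1+\lambda\Delta)$, and then simply defines $\overline{h}(\epsilon,\Delta)$ as the $h$ making the right-hand side equal $\epsilon$. You instead re-derive that correlation-decay estimate from scratch, and your derivation checks out: the one-step log-sensitivity $\lambda x_w x_u$, the identity $\sum_{w\in\child{u}}\lambda x_w x_u=1-x_u$, the Cauchy--Schwarz two-level bound $(1-x_u)\bigl(1-\frac{1-x_u}{\lambda\Delta x_u}\bigr)$, and the optimization at $x_u=1/\sqrt{1+\lambda\Delta}$ indeed give exactly the factor $1-\frac{2}{\sqrt{1+\lambda\Delta}+1}$ per two levels, i.e.\ the very inequality the paper cites, so the same $\overline{h}$, the $\tilde{O}(\sqrt{\Delta}\log(1/\epsilon))$ bound, and the iterated limit $\sqrt{\lambda}$ all follow as you state. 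What your route buys is independence from the black-box citation (and it makes transparent why the constant $\sqrt{\lambda}$ is tight, since the contraction is saturated at the regular-tree fixed point, which is exactly how the matching lower bound in Proposition~\ref{lowerboundP} is built); what it costs is that the two-level telescoping has to be done carefully. The one point you should make explicit is that the mean value theorem must be applied with a \emph{single} intermediate boundary vector (interpolating in log-coordinates between the true depth-$h$ values and the all-ones boundary), so that the intermediate quantities $\widetilde{x}_u$ are an exact solution of recursion~(\ref{eq:recx}) for that boundary; this is what licenses using the identity $\sum_{w}\lambda\widetilde{x}_w\widetilde{x}_u=1-\widetilde{x}_u$ at internal nodes, and together with the fact that the map sends $[0,1]^d$ values into $[1/(1+\lambda\Delta),1]$ it also gives the range bound you invoke. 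With that sentence added, your argument is a complete proof of the inequality the paper quotes, and the remainder (solving for $\epsilon$ and taking $\epsilon\to0$ before $\Delta\to\infty$) coincides with the paper.
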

\begin{proof}
Theorem 3.2 in
\cite{bayati2007simple} proves that:
\begin{equation}
\label{eq:correlation decay}
|\log x_v^h(v) - \log p_{G,\lambda}(v)|\leq \left(1-\frac{2}{\sqrt{1+\lambda\Delta}+1}\right)^{h/2}\log(1+\lambda\Delta).
\end{equation}
The lemma then follows directly by taking $\overline{h}(\epsilon,\Delta)$ to be the $h$ such that the right-hand side equals $\epsilon$.
\end{proof}

We now present the algorithmic implication of Lemma
\ref{lem:bayati}. We start with a simple remark. The exact value for
$\overline{h}(\epsilon,\Delta)$ follows from the proof of the lemma,
however this value will not be required in what follows as shown by
the following argument: the fact that $(z_1,\dots,z_{\Delta})\mapsto
\left( 1+\lambda \sum_{i=1}^\Delta z_i\right)^{-1}$ is strictly
decreasing in each positive variable $z_i$ implies (by a simple
induction) that for any $k\geq 0$, we have
\begin{equation}
\label{alternate}
x^{2k+1}_v(v) \leq x^{2k+3}_v(v) \leq p_{G,\lambda}(v)\leq x^{2k+2}_v(v)\leq x^{2k}_v(v).
\end{equation}
Consider an algorithm that computes $x_v^h(v)$
for increasing values of $h$ and stops at the first time two
consecutive outputs are such that $|\log x_v^{h+1}(v)-\log x_v^{h}(v)|\leq \epsilon$.
By Lemma \ref{lem:bayati}, it takes at most
$\overline{h}(\epsilon,\Delta)$ iterations and the last output will be an
$\epsilon$-approximation of $\log p_{G,\lambda}(v)$.

The algorithm \textsc{Approx-Marginal}($\lambda,\epsilon,v$) provides an estimate of  $p_{G,\lambda}(v)$, based on the Depth-First-Search (DFS) on the truncated path-tree rooted at $v$.
In the algorithm \textsc{DFS}$(\lambda,h,s,\ell)$,
integer $h$ is the truncated level of the path tree $T_G(v)$;
$s\in V$ is the current node in the graph $G$ visiting by the DFS;
and $path$ maintains an array of nodes in $G$ which form the path from $v$ to $s$ during the DFS.
This path also corresponds to a node in the path-tree $T^h_G(v)$ and let $\ell$ be the length of $path$.
The algorithm \textsc{DFS}$(\lambda,h,s,\ell)$ computes recursively the marginal probability of $path$ in $T^h_G(v)$.
Recall that $\mathcal{D}(v)$ returns the degree of $v\in V$ and $\mathcal{N}(v,i)$ returns the $i^{\rm th}$ (with $1\leq i\leq \mathcal{D}(v)$) neighbor of $v\in V$.\\

\begin{algorithm}{Approx-Marginal}{\lambda,\epsilon,v}
x[1]\=\CALL{DFS}(\lambda,1,1,v)\\
x[2]\=\CALL{DFS}(\lambda,1,2,v)\\
h\=2\\
\begin{WHILE}{|\log x[h]-\log x[h-1]|>\epsilon/e}
h\=h+1\\
x[h]\=\CALL{DFS}(\lambda,h,v,1)
\end{WHILE}\\
\RETURN x[h]
\end{algorithm}
%
\begin{algorithm}{DFS}{\lambda,h,s,\ell}
\begin{IF}
{\ell=h}{\RETURN 1}
\end{IF}\\
A\=1,\ path[\ell]\=s\\
\begin{FOR}{i\=1 \TO \mathcal{D}(s)}
t\=\mathcal{N}(s,i)\\
\begin{IF}{\forall j\in [1,\ell], \; t\neq path[j]}
A\=A+\CALL{DFS}(\lambda,h,t,\ell+1)
\end{IF}
\end{FOR}\\
\RETURN 1/(\lambda A)
\end{algorithm}

\begin{proposition}
\label{marginal}
The algorithm \textsc{Approx-Marginal}($\lambda,\epsilon,v$) gives an estimate $\widehat{p}$ of $p_{G,\lambda}(v)$, such that $|\widehat{p}-p_{G,\lambda}(v)|$ and $|\log \widehat{p}-\log p_{G,\lambda}(v)|$ are both smaller than $\epsilon$.
Its query complexity is $\overline{\mathcal{Q}}(\epsilon,\Delta)=\tilde{O}\left((1/\epsilon)^{\tilde{O}(\sqrt{\Delta})}\right)$. In addition, $Q=\overline{\mathcal{Q}}(\epsilon,\Delta)$ satisfies
\BEA\label{eq:limT}
\lim_{\Delta\to \infty}\frac{1}{\sqrt{\Delta}\log\Delta} \lim_{\epsilon\to 0}
\frac{\log Q}{\log (1/\epsilon)} = \sqrt{\lambda}.
\EEA
\end{proposition}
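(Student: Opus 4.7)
The plan is to combine the correlation-decay estimate of Lemma~\ref{lem:bayati} with the alternating bracket~(\ref{alternate}) to control both the accuracy of the output and the termination time of \textsc{Approx-Marginal}, and then to bound the cost of a single call to \textsc{DFS} by counting simple paths in the truncated path-tree $T_G^h(v)$.

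\emph{Correctness.} The bracket~(\ref{alternate}) sandwiches $p_{G,\lambda}(v)$ between every two consecutive iterates $x_v^{h-1}(v)$ and $x_v^{h}(v)$, so whenever the stopping test $|\log x[h]-\log x[h-1]|\le \epsilon/e$ is met we immediately obtain $|\log \widehat{p}-\log p_{G,\lambda}(v)|\le \epsilon/e\le\epsilon$. Since $p_{G,\lambda}(v)\in(0,1]$, the elementary inequality $e^{x}-1\le e\,x$ valid on $[0,1]$ then yields
\[
|\widehat{p}-p_{G,\lambda}(v)|\le p_{G,\lambda}(v)\bigl(e^{\epsilon/e}-1\bigr)\le\epsilon
\]
as soon as $\epsilon\le e$, which establishes the absolute-error bound.

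\emph{Termination and query cost.} By Lemma~\ref{lem:bayati} applied with precision $\epsilon/(2e)$, for every $h\ge \overline{h}(\epsilon/(2e),\Delta)$ we have $|\log x_v^{h}(v)-\log p_{G,\lambda}(v)|\le \epsilon/(2e)$; the triangle inequality then forces the stopping test to succeed no later than iteration $\overline{h}(\epsilon/(2e),\Delta)+1=\tilde{O}\bigl(\sqrt{\Delta}\log(1/\epsilon)\bigr)$. A single call \textsc{DFS}$(\lambda,h,v,1)$ visits exactly the nodes of $T_G^h(v)$, i.e.\ the simple paths in $G$ starting at $v$ of length at most $h-1$, whose number is bounded by $1+\Delta+\Delta(\Delta-1)+\cdots+\Delta(\Delta-1)^{h-2}\le\Delta^{h}$. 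At each visited node the algorithm issues one query to $\mathcal{D}$ and at most $\Delta$ queries to $\mathcal{N}$, so one \textsc{DFS} call costs $O(\Delta^{h+1})$. Summing the geometric series over all iterations, the total is dominated by its last term and equals $O(\Delta^{\overline{h}+2})=(1/\epsilon)^{\tilde{O}(\sqrt{\Delta}\log\Delta)}=(1/\epsilon)^{\tilde{O}(\sqrt{\Delta})}$, which is the claimed upper bound on $\overline{\mathcal{Q}}(\epsilon,\Delta)$.

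\emph{Asymptotics and main obstacle.} The preceding count gives $\log Q=(\overline{h}+O(1))\log\Delta$, so dividing by $\log(1/\epsilon)$ and letting $\epsilon\to 0$ produces $\log\Delta\cdot \lim_{\epsilon\to 0}\overline{h}(\epsilon,\Delta)/\log(1/\epsilon)$; dividing further by $\sqrt{\Delta}\log\Delta$ and letting $\Delta\to\infty$ plugs in the iterated limit $\sqrt{\lambda}$ from Lemma~\ref{lem:bayati} and yields~(\ref{eq:limT}). The step I expect to require most care is not any single inequality but the coherent packaging of the three different precision scales ($\epsilon$, $\epsilon/e$, $\epsilon/(2e)$) introduced to make the log-error and absolute-error bounds simultaneously valid; one has to check that the multiplicative constants inserted to pass from one to the other do not perturb the leading $\sqrt{\lambda\Delta}\log\Delta$ factor in the exponent, which follows from the fact that $\overline{h}(c\epsilon,\Delta)/\log(1/\epsilon)$ has the same iterated limit as $\overline{h}(\epsilon,\Delta)/\log(1/\epsilon)$ for any constant $c>0$.
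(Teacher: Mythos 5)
Your proof is correct and follows essentially the same route as the paper's: the alternating bracket~(\ref{alternate}) turns the stopping test into the log-error guarantee, Lemma~\ref{lem:bayati} bounds the number of iterations, and the $O(\Delta^{h})$ node count of the truncated path-tree gives the query bound and the limit~(\ref{eq:limT}). Your version is simply more explicit about the constant-factor bookkeeping (the $\epsilon/(2e)$ precision, the log-to-absolute error conversion, and the geometric sum over DFS calls), none of which affects the leading asymptotics.
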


\begin{proof}
Let $h$ be the final truncated level of the path-tree $T_G(v)$ in the algorithm.
We have $\widehat{p}=x_v^h(v)$ and $|\log x_v^h(v)-\log x_v^{h-1}(v)|<\epsilon/e$.
Thus $|\log \widehat{p}-\log p_{G,\lambda}(v)|< \epsilon/e$ by Inequality~\eqref{alternate}.
Since $ p_{G,\lambda}(v)$ and $\widehat{p}$ are at most 1, we then have $|\widehat{p}-p_{G,\lambda}(v)|< \epsilon$.
The number of nodes visited by the algorithm is $O\left(\Delta^{h}\right)$, so the number of queries is also $O\left(\Delta^{h}\right)$.
The proposition follows by applying the upper bound $\overline{h}(\epsilon,\Delta)$ on $h$ from Lemma \ref{lem:bayati}.
\end{proof}

\begin{remark*}
In Section~\ref{sec:partition function}, we need to estimate the marginal probability at the node $v$ in the graph $G_{v}=\{u\in V\mid u \succeq v\}$ instead of the graph $G$, where $\succ$ is some total order over $V$.
To achieve this, we only need to add an additional constraint $t\succeq v$ to line~6 of the DFS algorithm.
Denote \textsc{Approx-Marginal$^*$}$({\lambda,\epsilon,v})$ to be the modified version of \textsc{Approx-Marginal}$({\lambda,\epsilon,v})$ with the underlying graph $G_{v}$.
Again, Proposition~\ref{marginal} holds for the algorithm \textsc{Approx-Marginal$^*$}$({\lambda,\epsilon,v})$ by replacing $G$ by $G_v$.
\end{remark*}

The next propostion shows that there exists some $\underline{\mathcal{Q}}(\epsilon,\Delta)$, such that $Q=\underline{\mathcal{Q}}(\epsilon,\Delta)$ satisfies Equation~\eqref{eq:limT} and that $\underline{\mathcal{Q}}(\epsilon,\Delta)$ is a query lower bound for computing an $\epsilon$-approximation of $p_{G,\lambda}(v)$. This implies that Algorithm \textsc{Approx-Marginal} is optimal when the influence of $\epsilon$ is much larger than that of $\Delta$.
The idea of the lower bound proof is to construct two instances of \emph{almost full $\Delta$-ary trees} whose marginal probabilities at the root differ by more than $\epsilon$, while any approximation algorithm using a small number of queries cannot distinguish them.
See \apref{proof_lowerboundP} for a detailed proof of this proposition.

\begin{proposition}
\label{lowerboundP}
In order to approximate the marginal $p_{G,\lambda}(v)$ within an additive error $\epsilon$, any deterministic or randomized algorithm\footnote{
In the randomized case, the algorithm is expected to provide \emph{always} an estimate with an additive error $\epsilon$, and the proposition implies a lower bound on the \emph{average} number of queries of such an algorithm.
}
requires
$\Omega\left(\underline{\mathcal{Q}}(\epsilon,\Delta)\right)$ queries
where $Q=\underline{\mathcal{Q}}(\epsilon,\Delta)$ satisfies Equation~\eqref{eq:limT}.
\end{proposition}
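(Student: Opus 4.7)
The plan is to exhibit a family of almost-full $\Delta$-ary trees such that any algorithm making too few queries is forced, on at least one member, to commit to an estimate more than $\epsilon$ away from the true marginal. Fix $h^\star=\Theta(\sqrt{\lambda\Delta}\log(1/\epsilon))$, tuned so that the correlation-decay bound~(\ref{eq:correlation decay}) is of order $\epsilon$ at depth $h^\star$. Define a family $\{T_S\}_{S\subseteq L}$, where $L$ is the set of depth-$h^\star$ vertices of a full $\Delta$-ary tree rooted at $v$: every vertex in $S$ is extended by $\Delta$ leaf children, and every vertex in $L\setminus S$ is a true leaf. Analyzing the recursion~(\ref{eq:recx}) near its fixed point $x^\star$ satisfying $x^\star=1/(1+\lambda\Delta x^\star)$, one sees that the per-level contraction factor equals $\lambda\Delta(x^\star)^2=1-x^\star$, so the gap $|p_{T_\varnothing,\lambda}(v)-p_{T_L,\lambda}(v)|$ between the extreme cases $S=\varnothing$ and $S=L$ is of order $(1-x^\star)^{h^\star}$, and can be made to exceed $4\epsilon$ by a suitable choice of the constant hidden in $h^\star$.

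For a deterministic $q$-query algorithm, learning whether a given $u\in L$ belongs to $S$ requires at least one query at $u$ (either $\mathcal{D}(u)$ or $\mathcal{N}(u,\cdot)$), so after $q$ queries at most $q$ vertices of $L$ are committed. Hence if $q<c'\Delta^{h^\star}$ for a small enough constant $c'$, a positive fraction of $L$ remains uncommitted; the adversary completes the instance either by placing all uncommitted vertices in $S$ or by placing none of them, and by monotonicity of~(\ref{eq:recx}) both completions are consistent with the algorithm's transcript while, by the previous step, their root marginals still differ by more than $2\epsilon$. The algorithm's output is therefore off by more than $\epsilon$ on at least one of them. The randomized case is handled by Yao's minimax principle applied to the uniform distribution over random bijections between the tree's vertex set and $[n]$, together with a standard total-variation bound on the query-response history. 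Setting $\underline{\mathcal{Q}}(\epsilon,\Delta):=c'\Delta^{h^\star}$ gives $\log\underline{\mathcal{Q}}\sim\sqrt{\lambda\Delta}\log\Delta\cdot\log(1/\epsilon)$, from which~(\ref{eq:limT}) follows by a direct computation.

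The hardest step is the indistinguishability argument. One must formalize the fact that the root marginal is a \emph{collective} function of the depth-$h^\star$ frontier: any single modification at one vertex of $L$ perturbs $p_{G,\lambda}(v)$ by only $O((1-x^\star)^{h^\star}/\Delta^{h^\star})\ll\epsilon$, so the naive strategy of descending one path and reading off a single frontier vertex is useless; yet the aggregate effect of varying all of $L$ is of order $\epsilon$, so pinning down the marginal inevitably requires committing a constant fraction of the $\Delta^{h^\star}$ vertices of $L$. The delicate bookkeeping in the oracle model, and in the randomized setting the coupling between the algorithm's adaptive queries and the random vertex labels, is where the proof has to work, but it uses no ingredient beyond the contraction analysis already developed for Lemma~\ref{lem:bayati}.
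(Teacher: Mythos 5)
Your proposal follows essentially the same route as the paper's proof: an almost-full $\Delta$-ary tree of depth $h\approx\sqrt{\lambda\Delta}\log(1/\epsilon)$ with a hidden frontier configuration, the observation that the two extreme boundary conditions shift the root marginal by $\Theta(\epsilon)$ while each individual frontier vertex has influence only $O\left(\epsilon/\Delta^{h}\right)$ (this is exactly the paper's iterated contraction bound), so that any algorithm must commit a constant fraction of the $\Delta^{h}$ frontier vertices, with Yao's principle converting the deterministic adversary argument to the randomized case. The differences are cosmetic --- you perturb the frontier by appending children instead of deleting leaves, and you use the fixed-point derivative $\lambda\Delta(x^\star)^2=1-x^\star$ in place of the paper's explicit analysis of the sequence $y_k$, which yields the same decay rate --- so the approach matches the paper's.
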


\begin{remark*}
As noted in the introduction, the model with $\lambda_e$ ($e\in E$) varying
across the edges is of practical interest (allowing to model
various intensities on edges).
As soon as there exists $\lambda_{\max}$ such that for all $e\in E$, we
have $\lambda_e\in [0,\lambda_{\max}]$, it is easy to extend the
results of this section to the more general model defined by (note
that $\vec{\lambda}$ is now a vector in $[0,\lambda_{\max}]^E$):
$\pi_{G,\vec{\lambda}}(M) =  \frac{\prod_{e\in M} \lambda_e}{Z(G,\vec{\lambda})}$ where, $Z(G,\vec{\lambda}) =\sum_{M\in \mathbb{M}} \prod_{e\in M}\lambda_e$.
Results in this section and Sections~\ref{sec:partition function} and~\ref{sec:matching statistics} hold provided $\lambda$ is
replaced by $\lambda_{\max}$.
\end{remark*}

\section{Approximating the Partition Function}
\label{sec:partition function}

First, we need an arbitrary total order $\succ$ over $V$.
We can achieve this by assigning a random number $a_v\in[0,1]$ to each vertex $v$ and then defining $u\succ v$ as $a_u>a_v$.
However, if there are only a small number of vertices involved in our computation, we do not need to generate random numbers for all vertices.
Using the technique in~\cite{onak2012near}, we generate a random number each time we visit a new vertex and then save its value for the later visits.
Generating a random number can be done in sublinear time and the number of vertices in our computation is at most twice the number of queries,  which will later be proved to be a constant independent of $n$.
As a result, the total time complexity for this random generation is sublinear.

Define $G_{v}=\{u\in V\mid u \succeq v\}$.
The following formula which allows us to compute the partition function from the
marginals is obtained easily from (\ref{eq:pi}):
\begin{equation}
\label{z}
\log Z(G,\lambda) = \sum_{v\in V} - \log p_{G_v,\lambda}(v).
\end{equation}

The algorithm below estimates $\log Z(G,\lambda)$.
We sample $\lceil C/\epsilon^2\rceil$ vertices uniformly at random from $V$, where $C$ is some fixed constant.
For every sampled vertex $u$, we compute an estimate of the marginal $p_{G_u,\lambda}(u)$ using the algorithm
\textsc{Approx-Marginal$^*$}$(\lambda,\epsilon/2,u)$.
We then obtain an estimate of $Z(G,\lambda)$ from the estimates of marginals at sampled vertices.\\

\begin{algorithm}{Approx-Partition-Function}{\lambda,\epsilon}
s\=\lceil C/\epsilon^2\rceil\\
U\=\text{random multi-subset of $V$ with $s$ elements}\\
\RETURN  (n/s) \cdot\left(\sum_{u\in U}-\log(\CALL{Approx-Marginal$^*$}(\lambda,\epsilon/2,u))\right)
\end{algorithm}

We recall a basic lemma which follows from Hoeffding's inequality and which will be used several
times in the paper.
\begin{lemma}
\label{sampling}
(see \cite{canetti1995lower})
Let V be a set of $n$ real numbers in $[A,B]$, where $A$ and $B$ are constant. Let $V'$ be a multi-subset of $V$ consisting of $\Theta(1/\epsilon^2)$ elements chosen uniformly and independently at random. Let $\AVG$ be the average of all elements and $\AVG'$ be the $average$ of sampled elements. Then with high constant probability, we have: $\AVG'-\epsilon\leq \AVG \leq \AVG' +\epsilon.$
\end{lemma}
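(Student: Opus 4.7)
The plan is to apply Hoeffding's concentration inequality directly. Writing the multi-subset as a sequence $V' = (X_1,\ldots,X_s)$ with $s = \Theta(1/\epsilon^2)$, the hypothesis that elements of $V$ are chosen uniformly and independently (i.e., with replacement, which is exactly what the phrase ``multi-subset'' encodes) means that the $X_i$ are i.i.d., each supported in $[A,B]$, with common expectation $\mathbb{E}[X_i] = \AVG$. Thus $\AVG' = \frac{1}{s}\sum_{i=1}^{s} X_i$ is the empirical mean of $s$ bounded i.i.d.\ random variables, which is the standard setting for a Chernoff/Hoeffding bound.

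Hoeffding's inequality then yields
\[\Pr\bigl[|\AVG' - \AVG| \geq \epsilon\bigr] \;\leq\; 2 \exp\!\left(-\frac{2s\epsilon^2}{(B-A)^2}\right).\]
Since $B-A$ is a constant and $s \geq C/\epsilon^2$, the exponent is bounded above by a negative constant depending only on $C$ and $B-A$, so the $\epsilon$-dependence cancels. Choosing the constant $C = C(A,B)$ sufficiently large (for instance, any $C \geq \tfrac{1}{2}(B-A)^2 \log 6$ for failure probability $1/3$) makes the right-hand side at most any prescribed constant strictly less than $1$, which is what is meant by ``high constant probability''. Rearranging the resulting event gives the two-sided bound $\AVG' - \epsilon \leq \AVG \leq \AVG' + \epsilon$.

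The proof is essentially routine, and the only point that deserves a sentence of care is the distinction between sampling with and without replacement: because the statement allows $V'$ to be a multi-subset, the $X_i$ are genuinely independent and Hoeffding applies verbatim. If one instead insisted on sampling without replacement, a Hoeffding--Serfling-type inequality would give the same conclusion with the same order of sample size, but that refinement is not needed here. No real obstacle is anticipated; I expect a short paragraph referring to Hoeffding's bound and a choice of $C$ to suffice.
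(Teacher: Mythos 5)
Your proof is correct and follows exactly the route the paper indicates: the paper does not spell out a proof but states that the lemma ``follows from Hoeffding's inequality'' (citing \cite{canetti1995lower}), which is precisely your argument of applying Hoeffding to the i.i.d.\ samples in $[A,B]$ and choosing the constant in $\Theta(1/\epsilon^2)$ large enough to make the failure probability a small constant. Your remark on with- versus without-replacement sampling is a sensible clarification but does not change the approach.
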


\begin{theorem}
\label{partition}
\textsc{Approx-Partition-Function}(${\lambda,\epsilon}$) is an $\epsilon n$-approximation algorithm for $\log Z(G,\lambda)$ with query complexity
$\tilde{O}\left((1/\epsilon)^{\tilde{O}(\sqrt{\Delta})}\right)$.
\end{theorem}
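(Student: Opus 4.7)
The plan is to combine identity~(\ref{z}) with the sampling lemma (Lemma~\ref{sampling}) and the per-vertex guarantee of Proposition~\ref{marginal} applied to the variant \textsc{Approx-Marginal$^*$}. First I would verify that the $n$ real values $\{-\log p_{G_v,\lambda}(v) : v\in V\}$ being empirically averaged by the algorithm live in a bounded interval: trivially $p_{G_v,\lambda}(v)\in(0,1]$, and plugging the crude bound $p_{G\setminus\{v\},\lambda}(u)\leq 1$ into the recursion~(\ref{eq:recpi}) gives $p_{G_v,\lambda}(v)\geq 1/(1+\lambda\Delta)$. Hence each term lies in $[0,\log(1+\lambda\Delta)]$, a range depending only on $\lambda$ and $\Delta$ (constants in our regime), which is precisely the hypothesis required by Lemma~\ref{sampling}.

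With boundedness in hand, Lemma~\ref{sampling} applied to the multi-subset $U$ of size $s=\lceil C/\epsilon^2\rceil$ (for an appropriate constant $C$) yields, with high constant probability,
\[
\left|\frac{1}{s}\sum_{u\in U}-\log p_{G_u,\lambda}(u) \;-\; \frac{1}{n}\log Z(G,\lambda)\right| \leq \frac{\epsilon}{2},
\]
where identity~(\ref{z}) is used to identify the true mean on the right. Separately, Proposition~\ref{marginal} applied to \textsc{Approx-Marginal$^*$}$(\lambda,\epsilon/2,u)$ returns $\widehat{p}_u$ with $|\log\widehat{p}_u-\log p_{G_u,\lambda}(u)|\leq\epsilon/2$ for every sampled $u$, so the deterministic per-sample approximation error contributes at most $\epsilon/2$ to the empirical average. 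The triangle inequality then gives $\bigl|(n/s)\sum_{u\in U}-\log\widehat{p}_u - \log Z(G,\lambda)\bigr|\leq\epsilon n$, which is the claimed $\epsilon n$-approximation.

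For the query complexity, the algorithm performs $s=O(1/\epsilon^2)$ invocations of \textsc{Approx-Marginal$^*$}, each costing $\tilde{O}((1/\epsilon)^{\tilde{O}(\sqrt{\Delta})})$ queries by Proposition~\ref{marginal}; the extra $1/\epsilon^2$ factor is absorbed into the $\tilde{O}$ exponent, yielding the stated overall bound. I expect the main subtlety to lie not in the error analysis itself (which reduces to a clean triangle inequality) but in handling the random total order $\succ$: since each $G_u$ depends on $\succ$, the values $a_v$ defining the ordering must be consistent across the $s$ separate marginal computations, which is achieved by the lazy caching described before the algorithm. Because the total number of distinct vertices ever visited is bounded by the overall query complexity, hence independent of $n$, this bookkeeping does not affect the sublinear guarantee, and the $\epsilon n$-approximation as well as the $\tilde{O}((1/\epsilon)^{\tilde{O}(\sqrt{\Delta})})$ query bound follow.
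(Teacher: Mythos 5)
Your proposal is correct and follows essentially the same route as the paper: identity~(\ref{z}) plus the per-vertex guarantee of Proposition~\ref{marginal} for \textsc{Approx-Marginal$^*$}, combined with Lemma~\ref{sampling} and a triangle inequality, then multiplying $\lceil C/\epsilon^2\rceil$ samples by the cost $\overline{\mathcal{Q}}(\epsilon/2,\Delta)$ per call. The only differences are cosmetic --- you apply the sampling lemma to the true values $-\log p_{G_v,\lambda}(v)$ and add the per-call error afterwards, whereas the paper samples the estimated values, and you make explicit the boundedness check $-\log p_{G_v,\lambda}(v)\in[0,\log(1+\lambda\Delta)]$ and the consistency of the random order $\succ$, which the paper leaves implicit.
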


\begin{proof}
Let $A=\sum_{v\in V}-\log(\textsc{Approx-Marginal$^*$}(\lambda,\epsilon/2,v))$. By Proposition~\ref{marginal} and Equation (\ref{z}), $A$ is an $\epsilon n/2$-approximation of $\log Z(G,\lambda)$.
By Lemma~\ref{sampling}, there exists some constant $C$ such that approximating the marginal probability at $\lceil C/\epsilon^2\rceil$ sampled nodes gives an $\epsilon n/2$-approximation of $A$ with high probability.
This implies an $\epsilon n$-approximation of $\log Z(G,\lambda)$ with high probability.
The query complexity of this algorithm is $\lceil C/\epsilon^2\rceil\cdot\overline{\mathcal{Q}}(\epsilon/2,\Delta)=\tilde{O}\left((1/\epsilon)^{\tilde{O}(\sqrt{\Delta})}\right)$.
\end{proof}

Note that the size of any maximal matching is always lower bounded by
$\frac{m}{2\Delta-1}$, where $m$ is the number of edges. In
particular, since $Z(G,1)$ is the total number of matchings, we have
$\frac{m}{2\Delta-1}\log 2\leq \log Z(G,1)\leq m\log 2\leq
\frac{n\Delta}{2}\log 2$ so that if $m=\Omega(n)$,
we also have $\log Z(G,1)=\Theta(n)$. Hence, if $\epsilon$ and
$\Delta$ are constants and $m=\Omega(n)$, the error in the output of
our algorithm is of the same order as the evaluated quantity. This is
in contrast with the FPTAS (Fully Polynomial-Time Approximation Scheme) in \cite{bayati2007simple} or the FPRAS (Fully Polynomial-time Randomized Approximation Scheme) in \cite{jerbook,sinbook}  which outputs an $\epsilon$-approximation instead of an $\epsilon n$-approximation.
Of course, we can let $\epsilon$ tend to $0$ with $n$ like $c/n$ in
Theorem~\ref{partition}, so that our result (when $\Delta$ is
constant) is consistent with the FPTAS in
\cite{bayati2007simple}.
Indeed, in this case, clearly no sampling is required and if we replace the sampling step by a visit of each vertex, our algorithm is the same as in \cite{bayati2007simple}.

When we assume $\Delta$ to be fixed, the query complexity of the above algorithm is polynomial in $1/\epsilon$. Next we give a lower bound on the query complexity which is quadratic in $1/\epsilon$. In the proof, we use a lower bound result from \cite{chazelle2005approximating}, which is based on Yao's Minimax Principle~\cite{yao1977probabilistic}. See \apref{proof_lowerboundZ} for the proof of the following theorem.

\begin{theorem}
\label{lowerboundZ}
Any deterministic or probabilistic $\epsilon n$-approximation algorithm for $\log
Z(G,\lambda)$ needs $\Omega(1/\epsilon^2)$ queries on average. It is
assumed that $\epsilon>C/\sqrt{n}$ for some constant $C$.
\end{theorem}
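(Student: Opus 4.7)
The plan is to apply Yao's Minimax Principle~\cite{yao1977probabilistic}: exhibit a distribution over bounded-degree $n$-vertex graphs on which no deterministic $o(1/\epsilon^2)$-query algorithm can produce an $\epsilon n$-approximation of $\log Z(G,\lambda)$ with probability above $2/3$, which then forces $\Omega(1/\epsilon^2)$ queries on average for every randomized algorithm.

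For the hard distribution, I would take the uniform mixture of two ensembles $\mathcal{G}_+,\mathcal{G}_-$ of degree-$1$ graphs on $n$ vertices. In both, fix (after a random relabeling of vertices that hides the pair structure from the oracle) the $n/2$ disjoint pairs $\{v_{2i-1},v_{2i}\}$. Under $\mathcal{G}_\pm$, each pair independently carries its edge with probability $p_\pm := \tfrac{1}{2}\pm\delta$, where $\delta := 3\epsilon/\log(1+\lambda)$; otherwise both endpoints are isolated. Since each such graph is a disjoint union of edges and isolated vertices, $Z(G,\lambda)=(1+\lambda)^{|E(G)|}$, and hence $\log Z$ is linear in the edge count. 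A Chernoff bound combined with the hypothesis $\epsilon>C/\sqrt{n}$ shows that under $\mathcal{G}_\pm$, $\log Z$ concentrates within $o(\epsilon n)$ of its mean, while the two means are separated by $3\epsilon n$. Consequently any $\epsilon n$-approximation algorithm must determine, with probability at least $2/3$, which ensemble produced the input.

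The indistinguishability step reduces to a classical hypothesis-testing bound. An oracle call $\mathcal{D}(v)$ or $\mathcal{N}(v,i)$ reveals only whether the pair containing $v$ carries an edge, and repeated calls inside the same pair give no new information; combined with the random relabeling, $q$ queries therefore produce at most $q$ (essentially) i.i.d.\ $\mathrm{Bernoulli}(p_\pm)$ observations. Distinguishing $\mathrm{Bernoulli}(\tfrac{1}{2}+\delta)$ from $\mathrm{Bernoulli}(\tfrac{1}{2}-\delta)$ with such observations requires $q=\Omega(1/\delta^2)=\Omega(1/\epsilon^2)$ by a standard KL-divergence / total-variation bound --- exactly the template used in the MST-weight lower bound of \cite{chazelle2005approximating}, which I would invoke directly. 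The main nuisance is handling adaptive queries whose correlations depend on earlier answers; this is routine once the random relabeling and pair-redundancy observation are in place, and the theorem then follows by Yao's Principle.
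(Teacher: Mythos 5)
Your proposal is correct and follows essentially the same route as the paper's proof: a hard input distribution of degree-one graphs where each of $n/2$ vertex pairs independently carries an edge with a biased probability $\tfrac12\pm\Theta(\epsilon)$, so that $\log Z$ is proportional to the edge count, concentration under $\epsilon>C/\sqrt{n}$ separates the two ensembles by $\Theta(\epsilon n)$, and the reduction to distinguishing the bias invokes the $\Omega(1/\epsilon^2)$ bit-lookup bound of \cite{chazelle2005approximating} (itself via Yao's principle). The only differences are cosmetic: you fold the $\log(1+\lambda)$ factor into the bias $\delta$ rather than rescaling $\epsilon$ at the end, and you add a random relabeling that the paper's direct query-to-bit-lookup reduction does not need.
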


\section{Approximating Matching Statistics}
\label{sec:matching statistics}
We define the average size $E(G,\lambda)$ and the entropy $S(G,\lambda)$ of a matching as:
\[ E(G,\lambda)= \sum_{M\in \mathbb{M}}|M|\;\pi_{G,\lambda}(M) \quad\mbox{ and }\quad S(G,\lambda)=-\sum_{M\in \mathbb{M}}\pi_{G,\lambda}(M)\;\log \pi_{G,\lambda}(M). \]

The following algorithm estimates $E(G,\lambda)$, where $C$ is a fixed constant. \\

\begin{algorithm}{Approx-Matching-Statistics}{\lambda,\epsilon}
s\=\lceil C/\epsilon^2\rceil\\
U\=\text{random multi-subset of $V$ with $s$ elements}\\
\RETURN  n-(n/2s) \cdot\sum_{u\in U}(\CALL{Approx-Marginal}(\lambda,\epsilon/2,u))
\end{algorithm}

\begin{theorem}
\label{statistics}
\textsc{Approx-Matching-Statistics}$(\lambda,\epsilon)$ is an $\epsilon n$-approximation algorithm for $E(G,\lambda)$ with query complexity $\tilde{O}\left((1/\epsilon)^{\tilde{O}(\sqrt{\Delta})}\right)$.
In addition, any $\epsilon n$-approximation algorithm for $E(G,\lambda)$ needs $\Omega(1/\epsilon^2)$ queries.
\end{theorem}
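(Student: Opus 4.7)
The plan is to reduce the estimation of $E(G,\lambda)$ to the estimation of marginals $p_{G,\lambda}(v)$, exactly as Theorem~\ref{partition} reduced the partition function. First I would establish the vertex decomposition
\[
E(G,\lambda) \;=\; \tfrac{1}{2}\sum_{v\in V}\bigl(1-p_{G,\lambda}(v)\bigr) \;=\; \tfrac{n}{2}-\tfrac{1}{2}\sum_{v\in V} p_{G,\lambda}(v),
\]
which follows because every matching $M$ covers exactly $2|M|$ vertices, so $\sum_{v\in V}\mathbf{1}\{v\text{ is covered by }M\}=2|M|$, and taking expectation under $\pi_{G,\lambda}$ turns each indicator into $1-p_{G,\lambda}(v)$. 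This rewrites $E(G,\lambda)$ as $n/2$ minus a scaled average, over $v\in V$, of numbers in $[0,1]$, which is exactly the shape required to apply the sampling template developed in Section~\ref{sec:partition function}.

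Given this identity, the upper bound combines two approximation steps, mirroring the proof of Theorem~\ref{partition}. The sampling step: by Lemma~\ref{sampling}, for a sufficiently large constant $C$, the empirical mean over a uniform multi-subset $U$ of size $s=\lceil C/\epsilon^2\rceil$ approximates $\tfrac{1}{n}\sum_{v\in V}p_{G,\lambda}(v)$ within additive error $\epsilon/2$ with high constant probability. The local-computation step: replacing each true marginal $p_{G,\lambda}(u)$, $u\in U$, by $\textsc{Approx-Marginal}(\lambda,\epsilon/2,u)$ contributes an additional additive error of at most $\epsilon/2$ per sampled vertex (Proposition~\ref{marginal}). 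The triangle inequality and a final rescaling by $n/2$ then give an additive error of at most $\epsilon n$ for the whole algorithm with high probability. The query complexity is $s\cdot\overline{\mathcal{Q}}(\epsilon/2,\Delta)=\tilde{O}\bigl((1/\epsilon)^{\tilde{O}(\sqrt{\Delta})}\bigr)$, as claimed.

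For the matching $\Omega(1/\epsilon^2)$ lower bound, I would follow the same Yao-minimax template used in Theorem~\ref{lowerboundZ}. The goal is to construct two distributions on bounded-degree graphs that are statistically indistinguishable using $o(1/\epsilon^2)$ oracle queries but whose average matching sizes differ by $\Theta(\epsilon n)$. A natural candidate is to start from $n/2$ disjoint potential edges and include each independently with probability $q_a$, taking $q_0=1/2$ and $q_1=1/2+c\epsilon$ for a suitable constant $c=c(\lambda)$. A direct computation, using that a vertex contained in an isolated edge has marginal $1/(1+\lambda)$ while an isolated vertex has marginal $1$, gives $\mathbb{E}[E(\mathcal{G}_a,\lambda)]=q_a\cdot n\lambda/(2(1+\lambda))$ in expectation, so the gap is $\Theta(\epsilon n)$; on the other hand, distinguishing biases $1/2$ and $1/2+c\epsilon$ requires $\Omega(1/\epsilon^2)$ samples by the standard anti-concentration bound invoked in \cite{chazelle2005approximating}. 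The main obstacle is not the upper bound, whose error bookkeeping is routine, but rather verifying that the oracle pair $(\mathcal{D},\mathcal{N})$ can be simulated on the fly against these ensembles so that Yao's principle applies verbatim; this requires a small amount of care but no new ideas beyond Theorem~\ref{lowerboundZ}.
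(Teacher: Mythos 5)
Your proposal is correct and follows essentially the same route as the paper: the vertex decomposition of $E(G,\lambda)$ into marginals, Proposition~\ref{marginal} plus Lemma~\ref{sampling} with $\lceil C/\epsilon^2\rceil$ samples for the upper bound, and for the lower bound the same Yao-minimax construction of $\lfloor n/2\rfloor$ independently included isolated edges used in Theorem~\ref{lowerboundZ} (each present edge contributing $\lambda/(1+\lambda)$ to $E(G,\lambda)$, so the biased and unbiased ensembles differ by $\Theta(\epsilon n)$, with the same concentration step and $\epsilon>C/\sqrt{n}$ caveat). Incidentally, your identity $E(G,\lambda)=\tfrac{n}{2}-\tfrac{1}{2}\sum_{v\in V}p_{G,\lambda}(v)$ is the correct one; the paper's proof and the return line of \textsc{Approx-Matching-Statistics} write $n-\tfrac{1}{2}\sum_{v}p_{G,\lambda}(v)$, which appears to be a typo ($n$ in place of $n/2$) and does not affect the argument.
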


\begin{proof}
Let $A=\sum_{v\in V} \textsc{Approx-Marginal}(\lambda,\epsilon/2,v)$. By Proposition~\ref{marginal}, $A$ is an $\epsilon n/2$-approximation of $\sum_{v\in V}p_{G,\lambda}(v)$.
By Lemma~\ref{sampling}, there exists some constant $C$ such that approximating the marginal probability at $\lceil C/\epsilon^2\rceil$ sampled nodes gives an $\epsilon n/2$-approximation of $A$ with high probability.
This implies an $\epsilon n$-approximation of $\sum_{v\in V}p_{G,\lambda}(v)$ with high probability.
Since $E(G,\lambda)=n-\sum_{v\in V}p_{G,\lambda}(v)/2$, we thus get an $\epsilon n$-approximation of $E(G,\lambda)$ with high probability. The query complexity of this algorithm is $\lceil C/\epsilon^2\rceil\cdot\overline{\mathcal{Q}}(\epsilon/2,\Delta)=\tilde{O}\left((1/\epsilon)^{\tilde{O}(\sqrt{\Delta})}\right)$.
The query lower bound is obtained similarly as Theorem~\ref{lowerboundZ}.
\end{proof}

\begin{corollary}\label{cor:stat}
We have an $\epsilon n$-approximation algorithm for
$S(G,\lambda)$ with query complexity $\tilde{O}\left((1/\epsilon)^{\tilde{O}(\sqrt{\Delta})}\right)$.
In addition, any $\epsilon n$-approximation algorithm for $S(G,\lambda)$ needs $\Omega(1/\epsilon^2)$ queries.
\end{corollary}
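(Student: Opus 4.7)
The plan rests on the identity
\begin{equation*}
S(G,\lambda) = \log Z(G,\lambda) - E(G,\lambda) \log \lambda,
\end{equation*}
which follows directly by expanding $\log \pi_{G,\lambda}(M) = |M| \log \lambda - \log Z(G,\lambda)$ inside the definition of $S(G,\lambda)$ and using $\sum_M \pi_{G,\lambda}(M) = 1$ together with $\sum_M |M| \pi_{G,\lambda}(M) = E(G,\lambda)$. Since $\lambda$ is fixed, $|\log \lambda|$ is a constant, so it suffices to approximate each of the two terms to within additive error $\epsilon n$ up to constant factors.

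For the upper bound, I would therefore run \textsc{Approx-Partition-Function}$(\lambda, \epsilon')$ and \textsc{Approx-Matching-Statistics}$(\lambda, \epsilon')$ on the same graph with $\epsilon' = \epsilon / (2 \max(1, |\log \lambda|))$, and return the combination $\widehat{\log Z} - \widehat{E} \cdot \log \lambda$ as the estimate of $S(G,\lambda)$. By Theorem~\ref{partition} and Theorem~\ref{statistics}, each call individually achieves its claimed additive guarantee with high constant probability, and by a union bound both do simultaneously; the triangle inequality then gives an $\epsilon n$-approximation of $S(G,\lambda)$. The query complexity is the sum of the two, which remains $\tilde{O}((1/\epsilon)^{\tilde{O}(\sqrt{\Delta})})$ since $\log \lambda$ is absorbed into the constants.

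For the lower bound, the clean way is to mimic the construction used in Theorem~\ref{lowerboundZ}: exhibit two families of bounded-degree graphs on which the entropies $S(G,\lambda)$ differ by $\Omega(\epsilon n)$, yet which cannot be told apart by any algorithm making $o(1/\epsilon^2)$ queries, and then invoke Yao's minimax principle as in~\cite{chazelle2005approximating}. Concretely, one can reuse the very same pair of hard instances from Theorem~\ref{lowerboundZ}, for which the partition functions differ by $\Omega(\epsilon n)$; using the identity above in reverse, if the average matching sizes of the two instances are close (which is easy to arrange by picking the construction so that both instances have matching statistics of the same order, or by combining with the construction from Theorem~\ref{statistics}), then the entropy gap inherits the $\Omega(\epsilon n)$ separation.

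The main obstacle is the lower bound: one has to verify that in the hard instances used for $\log Z(G,\lambda)$ the average size $E(G,\lambda)$ does not drift in a way that cancels the entropy gap (otherwise $S = \log Z - E \log \lambda$ could in principle be preserved while $\log Z$ changes). The cleanest fix is either to choose $\lambda = 1$, where $S(G,1) = \log Z(G,1)$ exactly and the lower bound is immediate, or to augment the hard instances with dummy isolated vertices so that $E(G,\lambda)$ is controlled and the entropy gap survives. The upper bound, by contrast, is a direct corollary of the two theorems already established.
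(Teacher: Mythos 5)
Your proposal matches the paper's own proof: the same identity $S(G,\lambda)=\log Z(G,\lambda)-\log\lambda\cdot E(G,\lambda)$, combining \textsc{Approx-Partition-Function} and \textsc{Approx-Matching-Statistics} with suitably rescaled error parameters for the upper bound, and reducing the lower bound to the construction of Theorem~\ref{lowerboundZ}. Your worry about cancellation in the lower bound is unnecessary: in those hard instances each present edge contributes entropy $\log(1+\lambda)-\frac{\lambda\log\lambda}{1+\lambda}>0$ (and isolated vertices contribute $0$), so $S(G_s,\lambda)$ is proportional to the number of present edges and the $\Omega(\epsilon n)$ gap carries over directly for every fixed $\lambda>0$, without needing $\lambda=1$ or modified instances.
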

\begin{proof}
A simple calculation gives: $S(G,\lambda) =\log Z(G,\lambda) - \log \lambda \cdot E(G,\lambda).$
Let $\widehat{Z}$ be the output of $\textsc{Approx-Partition-Function}(\lambda,\epsilon/2)$ and $\widehat{E}$ be the output of $\textsc{Approx-Matching-Statistics}(\lambda,\epsilon/(2\log\lambda))$.
By Theorem~\ref{partition} and Theorem~\ref{statistics}, $\widehat{Z}-\log\lambda \cdot \widehat{E}$ is an $\epsilon n$-estimate of $S(G,\lambda)$ with high probability. Both $\widehat{Z}$ and $\widehat{E}$ are computed using $\tilde{O}\left((1/\epsilon)^{\tilde{O}(\sqrt{\Delta})}\right)$ queries.
The query lower bound is obtained similarly as Theorem~\ref{lowerboundZ}.
\end{proof}

\section{Some Applications}
\label{sec:applications}
So far, we did consider that the parameter $\lambda$ is fixed.
Letting $\lambda$ grow with $\frac{1}{\epsilon}$ allows us to get new results for the permanent of a matrix.
There is a FPRAS for the permanent of a matrix with non-negative entries~\cite{jerrum2004polynomial}.
When the matrix is the adjacency matrix of a \emph{constant degree
  expander} graph, there is a PTAS to estimate the permanent within a
multiplicative factor
$(1+\epsilon)^n$~\cite{gamarnik2010deterministic}. Using a key
technical result of \cite{gamarnik2010deterministic}, we get a sublinear-time algorithm within the same multiplicative factor (see \apref{sec:perm}).

Note that for a fixed graph $G$, if $\lambda\to \infty$ then the
distribution $\pi_{G,\lambda}$ converges toward the uniform
distribution on maximum matchings. Indeed, using a bound derived in
\cite{bordenave2012matchings}, we can show
that if $\lambda$ grows exponentially with $\frac{1}{\epsilon}$ our
technique allows to approximate the size of a maximum
matching (see \apref{sec:MM}). However our algorithm performs badly with respect to
\cite{yoshida2009improved}.

\subsubsection*{Acknowledgments.}
The authors acknowledge the support of the French \emph{Agence Nationale de la Recherche (ANR)} under reference ANR-11-JS02-005-01 (GAP project).

\bibliography{references}

\ifnum\fullversion=1

\appendix
\clearpage

\section{Proof of Proposition~\ref{lowerboundP}}
\label{proof_lowerboundP}
The idea is to define a certain distribution of graphs and show that on that random
input, any deterministic algorithm for approximating $p_{G,\lambda}(v)$ within an additive error $\epsilon$ requires $\underline{\mathcal{Q}}(\epsilon,\Delta)$ queries.
By Yao’s Minimax Principle~\cite{yao1977probabilistic}, the result follows.

We start by studying the full $\Delta$-ary tree of height $h$, let it be $T^h$. Define  $y_1=1$ and $y_k = \left( 1+\lambda\Delta y_{k-1}\right)^{-1}\mbox{ for }k\geq 2.$
Clearly $y_h$ is equal to the value computed at the root of
$T^h$ by the recursion (\ref{eq:recx}).

\begin{lemma}\label{lem:x}
We have $\lim_{k\to \infty} y_k= \frac{2}{1+\sqrt{1+4\lambda\Delta}}$.
Define $\underline{h}(\epsilon,\Delta) = \sup\{h,\: |y_{h}-y_{h-1}|\geq \epsilon
\}$. Then we have $\displaystyle\lim_{\Delta\to \infty}\frac{1}{\sqrt{\Delta}}\lim_{\epsilon\to
  0}\frac{\underline{h}(\epsilon,\Delta)}{\log(1/\epsilon)} = \sqrt{\lambda}$.
\end{lemma}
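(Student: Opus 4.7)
The plan is to proceed in three stages: identify and derive the fixed point, establish convergence of the iteration, and then extract the precise asymptotic rate in order to evaluate the double limit.

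For the first assertion, set $f(y) := (1+\lambda\Delta y)^{-1}$; then $y = f(y)$ reduces to $\lambda\Delta y^2 + y - 1 = 0$, whose unique positive root is $y^* = 2/(1+\sqrt{1+4\lambda\Delta})$. I record the identity $\lambda\Delta (y^*)^2 = 1 - y^*$, obtained directly from the quadratic, which will drive the rate computation below. A short induction shows $y_k \in (0,1]$ for all $k$; since $f$ is strictly decreasing on $(0,\infty)$, the subsequences $(y_{2k})$ and $(y_{2k+1})$ are each monotone (in opposite directions) and bounded, hence convergent to some $a, b \in (0,1]$ with $f(a)=b$ and $f(b)=a$. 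Writing $a(1+\lambda\Delta b) = 1 = b(1+\lambda\Delta a)$ and subtracting gives $a = b$, so both limits equal $y^*$.

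For the rate of convergence, I apply the mean value theorem: $y_{k+1}-y^* = f'(\xi_k)(y_k-y^*)$ with $\xi_k$ between $y_k$ and $y^*$. As $k\to\infty$, we have $\xi_k\to y^*$, so $f'(\xi_k)\to f'(y^*) = -\lambda\Delta(y^*)^2 = -(1-y^*)$. Because $f$ is strictly decreasing, the errors $y_k - y^*$ alternate in sign, whence $|y_h - y_{h-1}| = |y_h - y^*| + |y_{h-1} - y^*|$. These two facts combine to give $|y_k-y^*|^{1/k}\to 1-y^*$ and $|y_k - y_{k-1}|^{1/k}\to 1-y^*$, i.e., $\log|y_h - y_{h-1}| = h\log(1-y^*) + o(h)$.

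For the double limit, the definition of $\underline{h}(\epsilon,\Delta)$ together with this sharp asymptotic yields $\underline{h}(\epsilon,\Delta)/\log(1/\epsilon)\to 1/\log(1/(1-y^*))$ as $\epsilon\to 0$. From the closed form, $y^* \sim 1/\sqrt{\lambda\Delta}$ as $\Delta\to\infty$, and since $\log(1/(1-y^*)) \sim y^*$ when $y^*\to 0$, we find $\sqrt{\Delta}\log(1/(1-y^*)) \to 1/\sqrt{\lambda}$; taking reciprocals gives the stated limit $\sqrt{\lambda}$. The only delicate points are excluding a spurious 2-cycle of $f$ in the convergence step (handled cleanly by the symmetric subtraction above) and obtaining a two-sided asymptotic — not merely a bound — on $|y_h - y_{h-1}|$, which is what makes the double limit an equality rather than a one-sided inequality.
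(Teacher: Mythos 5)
Your proof is correct, but it follows a genuinely different route from the paper. The paper linearizes the continued-fraction iteration by introducing the auxiliary sequence $f_k=f_{k-1}+\lambda\Delta f_{k-2}$, so that $y_k=f_k/f_{k+1}$ has the closed form governed by the characteristic roots $\alpha,\beta$; this yields the exact asymptotic $|y_k-y_{k-1}|\sim C_\Delta D_\Delta^{k-1}$ with $D_\Delta=|\beta|/\alpha$, from which both limits are read off. You instead run a dynamical-systems argument directly on the map $f(y)=(1+\lambda\Delta y)^{-1}$: monotone alternating subsequences plus the symmetric subtraction to rule out a genuine $2$-cycle give convergence to $y^\ast=2/(1+\sqrt{1+4\lambda\Delta})$, and the mean value theorem at the fixed point gives the decay rate $|f'(y^\ast)|=\lambda\Delta (y^\ast)^2=1-y^\ast$, together with the sign alternation needed to convert error decay into the two-sided asymptotic $\log|y_h-y_{h-1}|=h\log(1-y^\ast)+o(h)$. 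Note that $1-y^\ast=1-1/\alpha=|\beta|/\alpha=D_\Delta$, so the two arguments identify the same rate; the paper's explicit solution buys a sharper statement (an exact equivalence with an explicit constant $C_\Delta$), while your linearization is more elementary and more portable — it only uses smoothness of $f$ and $|f'(y^\ast)|<1$, and the weaker $e^{o(h)}$-accurate rate is all that the double limit requires, since $-\log(1-y^\ast)\sim y^\ast\sim 1/\sqrt{\lambda\Delta}$ as $\Delta\to\infty$. Two small points you may wish to make explicit: $y_k\neq y^\ast$ for all $k$ (immediate since $y_1=1>y^\ast$ and $f$ is injective), so the MVT ratio is well defined and the alternation claim is not vacuous; and the passage from $|f'(\xi_j)|\to 1-y^\ast$ to $|y_k-y^\ast|^{1/k}\to 1-y^\ast$ is the Ces\`aro average of $\log|f'(\xi_j)|$ over the telescoping product, which deserves one line.
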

\begin{proof}
To study $y_k$, we introduce the auxiliary sequence:
$f_k=f_{k-1}+\lambda \Delta f_{k-2}$ for $k\geq 3$ and $f_1=f_2=1$ so that
$y_k=f_{k}/f_{k+1}$ for $k\geq 1$. Let $\alpha =(1+\sqrt{1+4\lambda\Delta})/2$ and
$\beta=(1-\sqrt{1+4\lambda\Delta})/2$, we have
$f_k=\frac{1}{2\alpha-1}(\alpha^k-\beta^k)$ and the first statement of the lemma follows.

A simple computation gives:
$|y_{k}-y_{k-1}| \sim
C_\Delta(D_\Delta)^{k-1}$ as $k\to \infty$, where $C_\Delta=\frac{\alpha^2+\beta^2-2\alpha\beta}{\alpha^3}$ and $D_\Delta=\frac{|\beta|}{\alpha}$.
So $\displaystyle \lim_{\epsilon\to 0}\frac{\underline{h}(\epsilon,\Delta)}{\log(1/\epsilon)} = -\frac{1}{\log D_\Delta}$.
Since $\log D_\Delta \sim -\frac{1}{\sqrt{\lambda \Delta}}$ as $\Delta\to \infty$,
the second statement of the lemma follows.
\end{proof}

Denote $L$ to be the set of leaves of the tree $T^h$. Let ${\bf e} = (e_u,\:u\in L)$ be a vector
of $\{0,1\}^L$. Define $T^h({\bf e})$ to be the tree
obtained from $T^h$ in the following way: every non-leaf node in $T^h$
remains in $T^h({\bf e})$, and a leaf $u$ in $T^h$ remains in
$T^h({\bf e})$ iff $e_u=1$.
 (As a result, we see that the recursion (\ref{eq:recx}) is valid with
 $x_u(v) = e_u$ for all $u\in L$).
Define the vector ${\bf x}^h({\bf e})=(x^h_u({\bf e}),\:u\in T^h({\bf e}))$ as defined in (\ref{eq:recx}).
We denote by $x^h({\bf e})$ the value of the component of ${\bf
  x}^h({\bf e})$ corresponding to the root of the tree.
For any node $u$ of depth $k<h$ in the tree $T^h$, simple monotonicity arguments show that for $h-k$ even, $y_{h-k}=x_u^h({\bf 0})\leq
x_u^h({\bf e})\leq x_u^h({\bf 1})=y_{h-k+1}$, and for $h-k$ odd, $y_{h-k+1}=x_u^h({\bf 1})\leq
x_u^h({\bf e})\leq x_u^h({\bf 0})=y_{h-k}$.
Define $d^h_u({\bf e}) =|x^h_u({\bf 1})-x^h_u({\bf e})|$ for all $u\in T^h$.
We have:
\BEAS
d^h_u({\bf e}) &=&\left| \frac{1}{1+\lambda \sum_{w\in\child{u}} x^h_w({\bf
      1})}-\frac{1}{1+\lambda \sum_{w\in \child{u}} x^h_w({\bf e})}\right|\\
&\leq& \frac{1}{1+\lambda
    \sum_{w\in \child{u}} x^h_w({\bf 1})}\cdot\frac{1}{ 1+\lambda
    \sum_{w\in \child{u}} x^h_w({\bf e})}\cdot\lambda \sum_{w\in \child{u}}d^h_w({\bf e})
\EEAS
Notice that the first term is  $y_{h-k+1}$ by definition.
For every $w\in\child{u}$, $x^h_w({\bf e})\geq \min(y_{h-k-1},y_{h-k})$, so the second term  $\leq \max(y_{h-k},y_{h-k+1})$.
Thus we have:
\begin{equation}
\label{recursion}
d^h_u({\bf e})\leq y_{h-k+1}\cdot\max(y_{h-k},y_{h-k+1}) \cdot\lambda\sum_{w\in \child{u}}d^h_w({\bf e}).
\end{equation}
Consider $u$ to be the root of the tree. Then $d^h_u({\bf e})=|x^h({\bf e})-x^h({\bf 1})|$.
Iterating $h-1$ times the inequality~\eqref{recursion}, we have:
\[
|x^h({\bf e})-x^h({\bf 1})| \leq \sum_{u\in L}e_u \cdot\prod_{i=2}^{h} \lambda y_i \cdot \max(y_{i-1},y_i).
\]
Using the fact that $|y_{i-1}-y_i|/y_i$ decreases exponentially, it is easy to show that there exists a constant $C$, such that
$\prod_{i=2}^{h}\max(y_{i-1},y_{i})<C\cdot \prod_{i=2}^{h}y_{i}$, for all $h$.
So we have:
\begin{equation}\label{accumulate}|x^h({\bf e})-x^h({\bf 1})| \leq \sum_{u\in L}e_u \cdot C\prod_{i=2}^{h} \lambda y_i^2=\sum_{u\in L}e_u\cdot\frac{C\lambda^{h-1}}{ f_{h+1}^2}= \sum_{u\in L}e_u \cdot O\left((\lambda/\alpha^2)^h\right).\end{equation}

Let $h=\underline{h}(2\epsilon,\Delta)$ and consider the uniform distribution on the set of trees $T^h({\bf e})$, for every ${\bf e}\in\{0,1\}^L$.
Suppose $\mathcal{A}$ is an deterministic algorithm for approximating $p_{G,\lambda}(v)$  using $o\left(\Delta^h\right)$ queries under this distribution.
Then $\mathcal{A}$ has visited at most $M=o(\Delta^h)$ positions in the $h^{\rm th}$ level before stopping.
Let ${\bf e^-}$ (resp. ${\bf e^+}$) be a vector in $\{0,1\}^{L}$, where the $M$ positions visited by $\mathcal{A}$ have fixed values and all other positions are $0$ (resp. $1$). We will show that $\mathcal{A}$ cannot provide an $\epsilon$-approximation of the marginal for both trees $T^h({\bf e^-})$ and $T^h({\bf e^+})$. By Equation (\ref{accumulate}), $|x^h({\bf e^+})-x^h({\bf 1})|=M\cdot O\left((\lambda/\alpha^2)^h\right)=o\left((\lambda\Delta/\alpha^2)^h\right)=o(|\beta/\alpha|^h)=o(\epsilon).$
Similarly we can show that $|x^h({\bf e^-})-x^h({\bf 0})|=o(\epsilon).$ By the definition of $h$, $|x^h({\bf 1})-x^h({\bf 0})|\geq 2\epsilon$, so we have $|x^h({\bf e^-})-x^h({\bf e^+})|> \epsilon$. Since $\mathcal{A}$ is deterministic, it cannot provide an estimate which is an $\epsilon$-approximation of $x^h({\bf e^-})$ and an $\epsilon$-approximation of $x^h({\bf e^+})$.
Thus we obtain a lower bound
$\Omega(\underline{\mathcal{Q}}(\epsilon,\Delta))$ for approximating the marginal within an error $\epsilon$, where $\underline{\mathcal{Q}}(\epsilon,\Delta)=\Delta^{\underline{h}(2\epsilon,\Delta)}$  satisfies Equation~\eqref{eq:limT}.

Notice that the trees studied above have maximal degree $\Delta+1$
but changing $\Delta$ to $\Delta+1$ will not affect the statement of
the proposition. Note also that Kahn and Kim did study in
\cite{kahn1998random} similar recursions for matchings in regular
graphs but with a different perturbation at each level of the tree.

\section{Proof of Theorem~\ref{lowerboundZ}}
\label{proof_lowerboundZ}
For $s\in\{0,1\}$, let $\mathcal{D}_s$ denote the distribution induced by
setting a binary random variable to $1$ with probability
$p_s=(1+(-1)^s\epsilon)/2$ (and $0$ else). We define a distribution
$\mathcal{D}$ on $m$-bit strings as follows: (1) pick $s=1$ with
probability $1/2$; (2) draw a random string from $\{0,1\}^m$ by
choosing each bit $b_i$ from $\mathcal{D}_s$ independently. The following lemma is proved in  \cite{chazelle2005approximating}.

\begin{lemma}
\label{probabilistic_lower_bound}
 Any probabilistic algorithm that can guess the value of $s$ with a probability of error below $1/4$ requires $\Omega(1/\epsilon^2)$ bit lookups on average.
\end{lemma}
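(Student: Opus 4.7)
The plan is to recast the guessing problem as a binary hypothesis test between two product-Bernoulli distributions, apply Le Cam's two-point method together with Pinsker's inequality to pin down the query complexity, and finally promote a worst-case bound to an expected-queries bound via a Markov truncation. By Yao's minimax principle it suffices to exhibit a hard input distribution — $\mathcal{D}$ itself — on which every deterministic decision tree achieving error below $1/4$ must query $\Omega(1/\epsilon^2)$ bits on average.

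First I would reduce the task to the i.i.d.\ Bernoulli setting. Conditional on $s$, the $m$ bits are i.i.d.\ $\mathrm{Ber}(p_s)$, so adaptivity and re-querying give no statistical advantage: the joint law of the answers returned on the first $q$ distinct queries issued by any fixed decision tree is exactly $\mathrm{Ber}(p_s)^{\otimes q}$, and one can couple the transcript with a prefix of an i.i.d.\ sequence so that the random stopping time $Q$ is measurable with respect to that prefix. Consequently the transcript distribution under hypothesis $s$ is a deterministic function of an i.i.d.\ $\mathrm{Ber}(p_s)$-sequence, and data processing lets us transfer any total-variation bound on the sequence to a bound on the transcript.

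Next I would apply the two-point lower bound. If the algorithm always stops within $q$ queries, Le Cam's inequality yields Bayes error at least $\tfrac12\bigl(1 - \mathrm{TV}(\mathrm{Ber}(p_0)^{\otimes q},\mathrm{Ber}(p_1)^{\otimes q})\bigr)$. Pinsker's inequality gives $\mathrm{TV} \leq \sqrt{\tfrac{q}{2}\mathrm{KL}(\mathrm{Ber}(p_0)\,\|\,\mathrm{Ber}(p_1))}$, and a Taylor expansion of the Bernoulli KL divergence around $\epsilon = 0$ shows $\mathrm{KL}(\mathrm{Ber}(p_0)\,\|\,\mathrm{Ber}(p_1)) = O(\epsilon^2)$, so $\mathrm{TV} \leq c\sqrt{q}\,\epsilon$ for a universal constant $c$. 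Forcing the error below $1/4$ then requires $c\sqrt{q}\,\epsilon \geq 1/2$, i.e.\ $q = \Omega(1/\epsilon^2)$.

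Finally, to pass from a worst-case query bound to an average bound, suppose a randomized algorithm $A$ has error probability below $1/4$ and expected query count $q^\star$. Build $A'$ that simulates $A$ but aborts once it has made $10 q^\star$ queries and outputs a fair coin flip; by Markov's inequality $A'$ aborts with probability at most $1/10$, inflating the error by at most $1/20$, so $A'$ has error below $3/10$ while using at most $10 q^\star$ queries deterministically. Applying the previous paragraph to $A'$ (whose error is a fixed positive constant below $1/2$) yields $10 q^\star = \Omega(1/\epsilon^2)$, hence $q^\star = \Omega(1/\epsilon^2)$. The main subtlety I anticipate is the coupling in the reduction step — making precise that adaptivity cannot bypass the i.i.d.\ Bernoulli bottleneck and that the TV estimate for fixed-length samples controls the TV of the adaptively-stopped transcript; the rest is a routine KL computation plus a one-line Markov argument.
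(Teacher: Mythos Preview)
Your argument is correct: the reduction to i.i.d.\ Bernoulli samples via the observation that adaptivity buys nothing when bits are exchangeable, Le Cam's inequality plus Pinsker's bound with the $O(\epsilon^2)$ Bernoulli KL, and the Markov truncation to convert an expected-query bound into a worst-case one are all sound and standard.

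There is, however, nothing to compare against in the paper itself. The authors do not prove this lemma; they simply quote it from Chazelle, Rubinfeld and Trevisan's minimum spanning tree paper (reference \cite{chazelle2005approximating} in the text). So your write-up is not an alternative to the paper's proof but rather a self-contained justification of a result the paper treats as a black box. If you look at the original source, the argument there is essentially the same two-point hypothesis-testing lower bound you give, though it is sometimes phrased via Hellinger distance or a direct total-variation estimate rather than the Pinsker/KL route; both lead to the same $\Omega(1/\epsilon^2)$ conclusion with only cosmetic differences.
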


In order to get the lower bound query complexity of $\log Z(G,\lambda)$, the idea is to create an \mbox{$n$-node} random graphs $G_s$ depending on $s\in\{0,1\}$ such that $\log Z(G_0, \lambda)-\log Z(G_1,\lambda)> \rho\epsilon n$ for some constant $\rho$ with high probability. So if there exists a $(\rho\epsilon n/3)$-approximation algorithm for $\log Z(G,\lambda)$ using $o(1/\epsilon^2)$ queries, then we can differentiate $G_0$ and $G_1$ thus obtain the value of $s$ with high probability using also $o(1/\epsilon^2)$ queries, which contradicts with the lower bound complexity in Lemma~\ref{probabilistic_lower_bound}.

Consider the graph $G$ consisting of $n$ isolated vertices $v_1,\cdots, v_n$. Pick $s\in\{0,1\}$ uniformly at random and take a random $\lfloor n/2 \rfloor$-bit string $b_1,\cdots,b_{\lfloor n/2 \rfloor}$ with bits drawn from $\mathcal{D}_s$ independently. Next, add an edge between $v_{2i-1}$ and $v_{2i}$ if and only if $b_i=1$. Notice that the function $\log Z(G,\lambda)$ is additive over disjoint components, so $\log Z(G_s,\lambda)=\sum_{i=1}^{\lfloor n/2\rfloor} x_i$, where $\{x_i\}_{1\leq i\leq n}$ are independent random variables, and each $x_i$ equals  $\log (1+\lambda)$ with probability $(1+(-1)^s\epsilon)/2$ and equals $0$ otherwise. For any two graphs $G_0$ and $G_1$ derived from $\mathcal{D}_0$ and $\mathcal{D}_1$ respectively, we have $\mathbb{E}[\log Z(G_0,
\lambda)]-\mathbb{E}[\log Z(G_1,\lambda)]=\log(1+\lambda)\cdot\epsilon\lfloor n/2\rfloor$. When
$\epsilon>C/\sqrt{n}$ for some constant $C$ large enough, we have $|\mathbb{E}[\log
Z(G_0, \lambda)]-\log Z(G_0, \lambda)|<\log(1+\lambda)\cdot\epsilon n/10$ and
$|\mathbb{E}[\log Z(G_1, \lambda)]-\log Z(G_1, \lambda)|<\log(1+\lambda)\cdot\epsilon n/10$
with high probability. Thus $\log Z(G_0, \lambda)-\log
Z(G_1,\lambda)> \log(1+\lambda)\cdot\epsilon n/5$ with high probability. Together with Lemma~\ref{probabilistic_lower_bound}, we know that any
probabilistic $\left(\log(1+\lambda)\cdot\epsilon n/15\right)$-approximation algorithm for $\log Z(G,\lambda)$ needs
$\Omega(1/\epsilon^2)$ queries on average, thus the statement of the theorem follows.

\section{Permanent of expander graphs}\label{sec:perm}
Consider a bi-partite graph $G$ with the node set $V=X \cup Y$, where $|X|=|Y|=n$. For every $S\subset V$, denote $N(S)$ to be the set of nodes adjacent to at least one node in $S$. For $\alpha>0$, a graph is an \emph{$\alpha$-expander} if for every subset $S\subset X$ and every subset $S\subset Y$, as soon as $|S|\leq n/2$, we have $|N(S)|\geq(1+\alpha)|A|$. Let $A=(a_{i,j})$ be the corresponding adjacency matrix of $G$, i.e., the rows and columns of $A$ are indexed by nodes of $X$ and $Y$ respectively, and $a_{i,j}=1$ iff $(x_i,y_j)$ is an edge in $G$. Let $\PERM$ denote the permanent of $A$. We already know that computing the permanent of a matrix is \#P-complete, even when the entries are limited to 0 and 1, so we look for an estimate of $\PERM$.

In~\cite{gamarnik2010deterministic}, Gamarnik and Katz provided a polynomial-time approximation algorithm for $\PERM$ with multiplication factor $(1+\epsilon)^n$. A key tool in their analysis is the following lemma.

\begin{lemma}
\label{permanent}
(\cite{gamarnik2010deterministic})
Let $G$ be a bi-partite $\alpha$-expander graph of bounded degree $\Delta$. Then for every $\lambda>0$, we have:
$$1\leq\frac{Z(G,\lambda)}{\lambda^n \PERM}\leq e^{O(n\lambda^{-1}\log^{-1}(1+\alpha)\log\Delta)}.$$
\end{lemma}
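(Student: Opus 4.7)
The plan is to prove the two inequalities separately. The left inequality is immediate: since $G$ is bipartite with $|X|=|Y|=n$, the maximum matching size in $G$ is at most $n$, and $m_n(G)$ equals $\PERM$ (the number of perfect matchings of a bipartite graph equals the permanent of its biadjacency matrix). Keeping only the $k=n$ term in the defining sum $Z(G,\lambda) = \sum_{k=0}^n m_k(G)\lambda^k$ then gives $Z(G,\lambda) \geq \lambda^n \PERM$, establishing the lower bound.

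For the upper bound, I would rewrite
\[\frac{Z(G,\lambda)}{\lambda^n \PERM} = \sum_{k=0}^{n} \frac{m_k(G)}{m_n(G)}\,\lambda^{-(n-k)},\]
and reduce the problem to bounding $m_k(G)/m_n(G)$ through the telescoping identity $m_k(G)/m_n(G) = \prod_{j=k}^{n-1} m_j(G)/m_{j+1}(G)$. The goal is then to establish a uniform lower bound on $m_{j+1}(G)/m_j(G)$ in terms of $\alpha$ and $\Delta$ that is strong enough to make the sum above at most $\exp\!\left(O(n \log\Delta /(\lambda \log(1+\alpha)))\right)$.

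The heart of the argument is an augmenting-path counting based on the $\alpha$-expansion. Given any matching $M$ of size $j<n$, both sides contain $n-j$ unmatched vertices. Running an alternating BFS from any unmatched $u \in X$, the expansion property forces the reachable sets to grow by a factor of roughly $1+\alpha$ per round, so after $t=O\!\left(\log n / \log(1+\alpha)\right)$ rounds the BFS has reached more than $n/2$ vertices of $Y$; by pigeonhole this set must contain some unmatched $v \in Y$, producing an $M$-augmenting path $P$ of length at most $2t+1$. A double-counting of pairs $(M,P)$, combined with the bound $\Delta^{O(t)}$ on the number of alternating paths from a fixed starting vertex, will give a ratio inequality of roughly the form
\[\frac{m_j(G)}{m_{j+1}(G)} \leq \frac{\Delta^{O(t)}}{n-j}.\]
Plugging this back into the telescoping product, changing variables to $i=n-k$, and evaluating the sum as a dominated series in $1/\lambda$ will yield the claimed exponential bound after optimizing over the BFS depth $t$ and noting that $\Delta^t = \exp(O(\log\Delta \cdot \log n / \log(1+\alpha)))$.

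The hard part will be making this double-counting sharp enough to reach the exponent $n\log\Delta/(\lambda \log(1+\alpha))$. The multiplicity with which a single $(j+1)$-matching $M'$ can be recovered as $M \triangle P$ must be controlled tightly, because a typical $M'$ admits many alternating sub-paths and a naive charging loses factors polynomial in $n$ that would destroy the exponential scaling. A related subtlety is that the BFS depth $t$ should be tuned as a function of $n-j$: when $n-j$ is large, expansion produces very short augmenting paths, whereas when $n-j$ is small we must push the BFS further before pigeonhole kicks in, and reconciling these two regimes into a single uniform ratio bound is what ultimately ties the exponent to $\log\Delta/\log(1+\alpha)$.
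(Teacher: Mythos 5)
The paper itself gives no proof of this lemma --- it is imported verbatim from \cite{gamarnik2010deterministic} --- so your attempt has to stand on its own, and it does not yet. The lower bound is fine (modulo noting that $\PERM\geq 1$, which follows from Hall's condition via expansion). For the upper bound, the step you yourself flag as ``the hard part'' is exactly where the argument breaks, and the intermediate inequality you aim for, $m_j/m_{j+1}\leq \Delta^{O(t)}/(n-j)$, overshoots what your double counting can deliver on two separate counts. First, the alternating BFS run from the \emph{set} of all $n-j$ free vertices only guarantees the existence of \emph{one} augmenting path of length $O\bigl(\log\frac{n}{n-j}/\log(1+\alpha)\bigr)$; to earn the factor $n-j$ in the denominator you need at least $n-j$ distinct short augmenting paths (say one per free vertex), and a BFS started from a single free vertex needs depth $\Theta(\log n/\log(1+\alpha))$ before expansion carries it past $n/2$, so the guaranteed length degrades from $\log\frac{n}{n-j}$ to $\log n$. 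Second, inverting the flip $M\mapsto M'=M\triangle P$ requires encoding where $P$ sits inside $M'$: an alternating path beginning and ending with $M'$-edges is specified by its initial matched edge, which alone costs a factor $\Theta(n)$ on top of the $\Delta^{O(t)}$ for the internal choices, and this factor cannot be waved away. Combining the two repairs gives at best $m_j/m_{j+1}\leq \frac{n}{n-j}\,\Delta^{O(\log n/\log(1+\alpha))}$, and after telescoping ($m_{n-r}/m_n\leq (n^r/r!)\,n^{O(r\log\Delta/\log(1+\alpha))}$) and summing the series one obtains an exponent $O\bigl(n^{1+O(\log\Delta/\log(1+\alpha))}\lambda^{-1}\bigr)$, i.e.\ weaker than the stated bound by a $\mathrm{poly}(n)$ factor; tuning the BFS depth with $n-j$ does not remove it.

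You should also be aware that this loss is unlikely to be recoverable by a sharper charging scheme, because the bound in the sharp form quoted here is extremely strong. Taking $\lambda\asymp n\log\Delta/\log(1+\alpha)$ in the stated inequality forces $m_{n-1}\leq O_{\Delta,\alpha}(n)\,m_n$, whereas the trivial bound is $m_{n-1}\geq n\,m_n$ and, for a random $\Delta$-regular bipartite graph (an $\alpha$-expander w.h.p.), a configuration-model computation gives $\mathbb{E}[m_{n-1}]/\mathbb{E}[m_n]\sim(\Delta-1)n^2/\Delta^2$; together with the standard concentration of matching counts in random regular graphs this strongly indicates that for typical bounded-degree expanders the ratio is of order $n^2$, so that the correction term scales like $n/\sqrt{\lambda}$ rather than $n/\lambda$ once $\lambda$ exceeds a constant. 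In other words, no amount of tightening of the augmenting-path double counting will reach the inequality exactly as displayed; what that machinery provably yields is the weaker exponent above (which still suffices, with a larger choice of $\lambda$, for an $\epsilon n$-approximation of $\log\PERM$). The right course is to go back to the precise statement and proof in \cite{gamarnik2010deterministic} and either reproduce their argument or restate the lemma in the form you can actually prove, rather than leave the multiplicity control as an acknowledged open step.
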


Based on this lemma and combined with our result in Section~\ref{sec:partition function}, we then obtain an algorithm to estimate $\log\PERM$ within an error $\epsilon n$ in sublinear time. This improves the polynomial-time algorithm in~\cite{gamarnik2010deterministic} which has an equivalent approximation factor.

\begin{proposition}
Let $G$ be a bi-partite $\alpha$-expander graph of bounded degree $\Delta$. Then there is an $\epsilon n$-approximation algorithm for $\log \PERM$ with query complexity $\tilde{O}\left((1/\epsilon)^{\tilde{O}(\sqrt{\Delta/(\epsilon\alpha)}}\right)$.
\end{proposition}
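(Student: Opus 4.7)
The plan is to reduce the permanent estimation problem to the partition function estimation of Section~\ref{sec:partition function} by choosing $\lambda$ large enough that the multiplicative slack in Lemma~\ref{permanent} becomes negligible. Taking logarithms of the bound in that lemma gives
\[
0 \;\leq\; \log Z(G,\lambda) - n\log\lambda - \log \PERM \;\leq\; O\!\left(\tfrac{n\log \Delta}{\lambda\log(1+\alpha)}\right).
\]
So if we pick $\lambda = \lambda(\epsilon,\alpha,\Delta) = \Theta\!\left(\frac{\log \Delta}{\epsilon\log(1+\alpha)}\right) = \Theta\!\left(\frac{\log\Delta}{\epsilon\alpha}\right)$ (using $\log(1+\alpha)=\Theta(\alpha)$ for small $\alpha$, otherwise $\log(1+\alpha)$), then the quantity $\log Z(G,\lambda) - n\log\lambda$ is an $\epsilon n/2$-approximation of $\log\PERM$ deterministically.

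The algorithm is then simply: compute $\lambda$ as above, run \textsc{Approx-Partition-Function}$(\lambda,\epsilon/2)$ to get an estimate $\widehat Z$ of $\log Z(G,\lambda)$ with additive error $\epsilon n/2$, and output $\widehat Z - n\log\lambda$. By the triangle inequality this is an $\epsilon n$-approximation of $\log\PERM$ with high probability. Note that $n\log \lambda$ is a deterministic scalar that requires no queries.

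The main obstacle (and the only point where care is really needed) is tracking the dependence of the query complexity on the now $\epsilon$-dependent value of $\lambda$. Theorem~\ref{partition} was stated with $\lambda$ treated as fixed, but tracing back through Lemma~\ref{lem:bayati} and Proposition~\ref{marginal}, the truncation depth satisfies $\overline{h}(\epsilon,\Delta)=\tilde{O}(\sqrt{\lambda\Delta}\log(1/\epsilon))$ because the contraction factor in inequality~(\ref{eq:correlation decay}) is $1 - 2/(\sqrt{1+\lambda\Delta}+1)$. Consequently the query complexity of \textsc{Approx-Partition-Function} scales as $\Delta^{\overline{h}} = (1/\epsilon)^{\tilde{O}(\sqrt{\lambda\Delta}\log\Delta)}$. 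Substituting the chosen $\lambda = \Theta(\log\Delta/(\epsilon\alpha))$ yields $\sqrt{\lambda\Delta} = \tilde{O}\!\bigl(\sqrt{\Delta/(\epsilon\alpha)}\bigr)$, and hence the overall query complexity $\tilde{O}\bigl((1/\epsilon)^{\tilde{O}(\sqrt{\Delta/(\epsilon\alpha)})}\bigr)$ claimed in the proposition.

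The proof itself should therefore be quite short: (i) verify the deterministic bias bound by taking logs in Lemma~\ref{permanent} and plugging in $\lambda$; (ii) invoke \textsc{Approx-Partition-Function} with parameter $\epsilon/2$; (iii) compose the two errors; (iv) propagate the value of $\lambda$ through the complexity bound of Proposition~\ref{marginal}. Step (iv) is really the only place routine care is needed, since it requires acknowledging that the exponent $\sqrt{\lambda}$ in Lemma~\ref{lem:bayati} combines with our choice of $\lambda$ to give the stated $\sqrt{\Delta/(\epsilon\alpha)}$ inside the exponent.
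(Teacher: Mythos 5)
Your proposal is correct and follows essentially the same route as the paper: choose $\lambda=\Theta(\log\Delta/(\epsilon\alpha))$ so that the multiplicative slack in Lemma~\ref{permanent} contributes at most $\epsilon n/2$ after taking logarithms, run \textsc{Approx-Partition-Function}$(\lambda,\epsilon/2)$, and output $\widehat Z-n\log\lambda$. Your step (iv), tracing the $\lambda$-dependence of the truncation depth through inequality~\eqref{eq:correlation decay} to get the exponent $\tilde{O}(\sqrt{\Delta/(\epsilon\alpha)})$, is exactly what the paper's stated complexity relies on (and you spell it out more explicitly than the paper does).
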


\begin{proof}
Take $\lambda=\Theta\left(\log\Delta/(\epsilon\alpha)\right)$ so that $O(\lambda^{-1}\log^{-1}(1+\alpha)\log\Delta)<\epsilon/2$.
The algorithm \textsc{Approx-Partition-Function}$(\lambda,\epsilon/2)$ in Section~\ref{sec:partition function} provides an estimate $\widehat{Z}$ of $\log Z(G,\lambda)$ using $\tilde{O}\left((1/\epsilon)^{\tilde{O}(\sqrt{\Delta/(\epsilon\alpha)}}\right)$ queries, and with high constant probability, $\widehat{Z}$ is an $(\epsilon n/2)$-approximation of $\log Z(G,\lambda)$.
From Lemma~\ref{permanent}, $\log\widehat{Z}-n\log\lambda$ is an $\epsilon n$-approximation of $\log\PERM$ with high constant probability.
\end{proof}

\section{Maximum matching size}\label{sec:MM}

As $\lambda\to\infty$, $E(G,\lambda)$ tends to the size of a maximum matching, let it be $\OPT$.

\begin{lemma}
\label{activity}
For any $\epsilon>0$, taking $\lambda=e^{\frac{\Delta\log 2}{2\epsilon}}$ in $E(G,\lambda)$ leads to an $\epsilon n$-approximation of the maximum matching size.
\end{lemma}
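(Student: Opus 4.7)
The plan is to sandwich $E(G,\lambda)$ between $\OPT-\epsilon n$ and $\OPT$. The upper bound $E(G,\lambda)\leq \OPT$ is immediate because $|M|\leq \OPT$ for every $M\in\mathbb{M}$, so the weighted average $E(G,\lambda)=\sum_M |M|\,\pi_{G,\lambda}(M)$ cannot exceed $\OPT$.

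For the matching lower bound, my plan is to exploit the entropy identity already used in Corollary~\ref{cor:stat}, namely
\[
S(G,\lambda) \;=\; \log Z(G,\lambda) - (\log\lambda)\,E(G,\lambda),
\]
together with the non-negativity of entropy $S(G,\lambda)\geq 0$. Rearranging gives
\[
E(G,\lambda) \;\geq\; \frac{\log Z(G,\lambda) - S(G,\lambda)}{\log\lambda}.
\]
I then bound the two pieces separately: a maximum matching contributes the single term $\lambda^{\OPT}$ to the partition function, so $Z(G,\lambda)\geq \lambda^{\OPT}$ and hence $\log Z(G,\lambda)\geq \OPT\log\lambda$. Meanwhile, since every matching is a subset of $E$ and $|E|\leq n\Delta/2$ by the degree assumption, we have $|\mathbb{M}|\leq 2^{n\Delta/2}$, and the entropy of any distribution on $\mathbb{M}$ is at most $\log|\mathbb{M}|\leq \frac{n\Delta\log 2}{2}$.

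Plugging these into the rearranged identity yields
\[
E(G,\lambda) \;\geq\; \OPT - \frac{n\Delta\log 2}{2\log\lambda}.
\]
Choosing $\lambda=e^{\Delta\log 2/(2\epsilon)}$ makes the correction term exactly $\epsilon n$, so $\OPT-\epsilon n \leq E(G,\lambda)\leq \OPT$, which is the claimed $\epsilon n$-approximation.

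There is no real obstacle here; the only thing to double-check is the choice of bounds, and in particular that the crude entropy bound $S\leq\log|\mathbb{M}|$ together with the crude partition-function bound $Z\geq\lambda^{\OPT}$ already yield the tight dependence $\log\lambda=\Theta(\Delta/\epsilon)$ announced in the statement, without needing any refined control over the distribution of matching sizes (this is the role played in a slightly different form by the bound of \cite{bordenave2012matchings}).
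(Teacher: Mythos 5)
Your proof is correct, but it takes a different (and more self-contained) route than the paper. The paper's proof is a one-liner: it invokes Lemma~12 of \cite{bordenave2012matchings}, which states $E(G,\lambda)\leq \OPT\leq E(G,\lambda)+\frac{m\log 2}{\log\lambda}$, and then plugs in $m\leq \Delta n/2$ and the stated value of $\lambda$. You instead reprove this sandwich from first principles: the upper bound $E(G,\lambda)\leq\OPT$ is immediate, and the lower bound follows from the identity $S(G,\lambda)=\log Z(G,\lambda)-\log\lambda\cdot E(G,\lambda)$ (the same identity used in Corollary~\ref{cor:stat}) combined with $Z(G,\lambda)\geq\lambda^{\OPT}$ and $S(G,\lambda)\leq\log|\mathbb{M}|\leq \frac{n\Delta\log 2}{2}$, giving $E(G,\lambda)\geq\OPT-\frac{n\Delta\log 2}{2\log\lambda}$; since $\lambda>1$ the division by $\log\lambda$ is legitimate, and your choice of $\lambda$ makes the error term exactly $\epsilon n$. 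In fact, the same entropy argument with $|\mathbb{M}|\leq 2^{m}$ recovers the cited bound of \cite{bordenave2012matchings} verbatim, so your argument buys a proof that does not rely on an external lemma, at the cost of a few extra lines; the paper's citation buys brevity. One cosmetic remark: your appeal to the non-negativity of entropy is superfluous — the rearranged relation $E(G,\lambda)=\bigl(\log Z(G,\lambda)-S(G,\lambda)\bigr)/\log\lambda$ is an exact identity, and the only entropy fact you actually use is the upper bound $S(G,\lambda)\leq\log|\mathbb{M}|$.
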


\begin{proof}
Since $m\leq \frac{\Delta n}{2}$ in a bounded degree graph, this lemma follows directly from Lemma~12 in \cite{bordenave2012matchings}, which proves that $E(G,\lambda)\leq \OPT\leq E(G,\lambda)+\frac{m\log2}{\log \lambda}$.
\end{proof}
Applying the above value of $\lambda$ to the \textsc{Approx-Matching-Statistics}$(\lambda,\epsilon)$ algorithm of Section~\ref{sec:matching statistics} leads to the following proposition.
\begin{proposition}
We have an $\epsilon n$-approximation algorithm for the maximum matching size with query complexity $\tilde{O}\left((1/\epsilon)^{e^{\tilde{O}(\Delta/\epsilon)}}\right)$.
\end{proposition}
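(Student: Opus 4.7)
The plan is to combine Lemma~\ref{activity} with Theorem~\ref{statistics} by running \textsc{Approx-Matching-Statistics}$(\lambda,\epsilon/2)$ at the specific activity $\lambda = e^{\Delta \log 2/\epsilon}$. First I would invoke Lemma~\ref{activity} with parameter $\epsilon/2$ (which corresponds exactly to this choice of $\lambda$) to obtain the deterministic bound $0 \leq \OPT - E(G,\lambda) \leq (\epsilon/2)n$. Next, Theorem~\ref{statistics} guarantees that the output $\widehat{E}$ of \textsc{Approx-Matching-Statistics}$(\lambda,\epsilon/2)$ satisfies $|\widehat{E}-E(G,\lambda)| \leq (\epsilon/2) n$ with high probability. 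The triangle inequality then yields $|\widehat{E}-\OPT|\leq \epsilon n$, which is the required approximation guarantee.

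The real work is tracking the query complexity now that $\lambda$ is no longer treated as a constant but grows like $e^{\Delta/\epsilon}$. I would reopen Lemma~\ref{lem:bayati}: the truncation depth needed is read off from Inequality~\eqref{eq:correlation decay} by setting its right-hand side equal to $\epsilon/2$, which gives
\[
\overline{h}(\epsilon/2,\Delta) \;=\; O\!\left(\sqrt{1+\lambda\Delta}\,\log\!\bigl(\log(1+\lambda\Delta)/\epsilon\bigr)\right).
\]
Substituting $\lambda = e^{\Delta \log 2/\epsilon}$ collapses this to $\overline{h} = e^{\tilde{O}(\Delta/\epsilon)}$. Since \textsc{Approx-Marginal} performs a DFS on the truncated path-tree, its query complexity is $\Delta^{\overline{h}}$; rewriting $\Delta^{\overline{h}} = (1/\epsilon)^{\overline{h}\,\log\Delta/\log(1/\epsilon)}$ and absorbing the innocuous factor $\log\Delta/\log(1/\epsilon)$ into the $\tilde{O}$ exponent produces $(1/\epsilon)^{e^{\tilde{O}(\Delta/\epsilon)}}$. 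The outer $\lceil C/\epsilon^2\rceil$ sampling factor from \textsc{Approx-Matching-Statistics} is negligible next to this double-exponential term and disappears into the $\tilde{O}$.

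The main obstacle will be the careful propagation of the $\lambda$-dependence through bounds that the paper had stated with $\lambda$ held constant: in the asymptotic form of Lemma~\ref{lem:bayati} the hidden $\sqrt{\lambda}$ prefactor is itself super-polynomial in $1/\epsilon$, so one must verify that no further suppressed dependence on $\lambda$ in the DFS analysis (in particular the $\log(1+\lambda\Delta)$ factor on the right of~\eqref{eq:correlation decay}) blows up the double exponential into something worse. Once this bookkeeping is done, the bounds match the claimed $(1/\epsilon)^{e^{\tilde{O}(\Delta/\epsilon)}}$ exactly.
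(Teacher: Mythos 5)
Your proposal is correct and follows essentially the same route as the paper, which simply plugs the $\lambda = e^{\Theta(\Delta/\epsilon)}$ from Lemma~\ref{activity} into \textsc{Approx-Matching-Statistics} and inherits the complexity bound. Your explicit re-derivation of the truncation depth from Inequality~\eqref{eq:correlation decay} with $\lambda$ no longer constant (and the $\epsilon/2$--$\epsilon/2$ error split) is precisely the bookkeeping the paper leaves implicit, and it checks out.
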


\begin{remark*}
The query complexity of our algorithm is double exponential in $\Delta/\epsilon$, which is outperformed by \cite{yoshida2009improved}.
\end{remark*}

\section{Independent Sets}\label{sec:other}
Let $\mathbb{I}$ be the set of independent sets of $G$. The partition function of the system is defined by $Z_I(G,\lambda) = \sum_{I\in \mathbb{I}} \lambda^{|I|}$, and the Gibbs distribution on the space $\mathbb{I}$ is defined by $\pi_{G,\lambda} (I) =\frac{\lambda^{|I|}}{Z_I(G,\lambda)}$. For every  $v\in V$, define $p_{G,\lambda}(v) := \pi_{G,\lambda}(v\notin I) =
\sum_{I \not\ni v}\pi_{G,\lambda} (I)$, where $I \not\ni v$ is an independent set not containing $v$.

Notice that $1/Z_I(G,\lambda)$ is exactly the probability of the empty
set in Gibbs distribution, which is also equal to $\prod_{v\in V}p_{G_v,\lambda}(v)$, where $G_v$ is defined similarly as in Section~\ref{sec:partition function}.
Hence we have: \[\log Z_I(G,\lambda)=\sum_{v\in V}-\log p_{G_v,\lambda}(v).\]
However it is well-known that the correlation decay implying a result
similar to Lemma \ref{lem:bayati} does not hold for all values of
$\lambda$ in the independent set problem.
Indeed Weitz in \cite{weitz2006counting} gave a FPTAS for estimating
$Z(G,\lambda)$ up to the critical activity $\lambda_c$ for the uniqueness of the
Gibbs measure on the infinite $\Delta$-regular tree.
We adapt his approach and get the analogue of Lemma \ref{lem:bayati}.

\begin{lemma}(Corollary~2.6 in~\cite{weitz2006counting})
\label{strong spatial mixing}
Let $\lambda$ be such that $0<\lambda<\lambda_c(\Delta)=\frac{\Delta^\Delta}{(\Delta-1)^{\Delta+1}}$.
There exists some decaying rate $\delta$ with $\delta(l)=O(e^{-\alpha l})$ for some $\alpha>0$, such that every graph of maximum degree $\Delta+1$ with activity $\lambda$ exhibits \emph{strong spatial mixing (see \cite{weitz2006counting})} with rate $\delta$.
\end{lemma}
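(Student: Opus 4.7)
The plan is to reproduce Weitz's strategy, which has two distinct phases: a combinatorial reduction to a tree, and an analytic contraction argument on the tree. First I would invoke the \emph{self-avoiding walk tree} construction: for any finite graph $G$ of maximum degree $\Delta+1$ and distinguished vertex $v$, build a rooted tree $T_{\mathrm{SAW}}(G,v)$ whose vertices are the self-avoiding walks from $v$ in $G$, and fix boundary values on those leaves that correspond to walks closing a cycle (using an arbitrary ordering of neighbors to decide, at each closure, whether the leaf is fixed to ``occupied'' or ``unoccupied''). Weitz's key identity states that the marginal $p_{G,\lambda}(v)$ equals the root marginal of the hard-core model on $T_{\mathrm{SAW}}(G,v)$ under these boundary conditions. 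This reduces strong spatial mixing on arbitrary graphs of degree $\Delta+1$ to the same statement on trees of degree $\Delta+1$ with arbitrary leaf boundary conditions.

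On a tree, the ratio $R_u := \Pr[u\in I]/\Pr[u\notin I]$ satisfies the recursion $R_u=\lambda\prod_{w\in\child{u}}(1+R_w)^{-1}$, with the boundary values set at the leaves. To bound the influence of leaves at depth $\ell$ on the root, I would pass to a potential $\Phi$ with $y_u=\Phi(R_u)$ defined so that the symmetrized tree recursion on the $\Delta$-regular tree is a strict contraction in the $y$-coordinate. The standard choice $\Phi'(R)=\sqrt{-\varphi'(R)}$ (with $\varphi(R)=\lambda/(1+R)^{\Delta}$ the univariate recursion) symmetrizes the Jacobian and yields the uniform bound
\[
\sum_{w\in\child{u}}\Bigl|\frac{\partial y_u}{\partial y_w}\Bigr|\;\le\;\rho(\lambda,\Delta)<1
\qquad\text{whenever }0<\lambda<\lambda_c(\Delta)=\frac{\Delta^\Delta}{(\Delta-1)^{\Delta+1}}.
\]

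The main obstacle is establishing this strict contraction with the right threshold, because a naive Lipschitz bound on the original $R$-recursion gives only $\lambda\Delta/(1+R)^{2}\le 1$, which corresponds to a much smaller activity. Verifying that $\rho(\lambda,\Delta)<1$ exactly up to the tree-uniqueness threshold $\lambda_c(\Delta)$ requires analyzing the fixed point of $\varphi$ and using a careful AM-GM (or concavity) estimate on $\sum_w|\partial y_u/\partial y_w|$ evaluated at arbitrary configurations, not just the symmetric fixed point; this is precisely the step where Weitz's choice of potential is essential, and where the constant $\alpha>0$ in the decay rate is determined by the gap $-\log\rho(\lambda,\Delta)$.

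Given the contraction, a straightforward induction from the leaves to the root shows that any two boundary assignments at depth $\ell$ produce root values differing by at most $\rho^{\ell}$ in the $y$-coordinate. Since $\Phi$ is bi-Lipschitz on the bounded interval of admissible $R_u$ values (which lie in $[0,\lambda]$), this transfers to an $O(e^{-\alpha\ell})$ bound on the difference of the original marginals. Combining with the SAW-tree identity from the first step yields strong spatial mixing on the original graph with decay rate $\delta(\ell)=O(e^{-\alpha\ell})$, finishing the proof.
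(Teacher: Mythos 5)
The paper itself offers no proof of this lemma: it is imported verbatim as Corollary~2.6 of \cite{weitz2006counting} and used as a black box, so the only thing to compare against is Weitz's own argument, whose two-phase structure (self-avoiding-walk tree identity, then a contraction estimate on trees) your sketch reproduces correctly at the level of strategy.

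As a proof, however, your proposal has a genuine gap exactly where the theorem lives. The entire content of the statement is the claim that the tree recursion, measured in a suitable potential, contracts uniformly over all admissible configurations and all child degrees at most $\Delta$ for every $\lambda<\lambda_c(\Delta)=\Delta^\Delta/(\Delta-1)^{\Delta+1}$; you assert this bound, name a candidate potential $\Phi'(R)=\sqrt{-\varphi'(R)}$, and then explicitly defer its verification (``this is precisely the step where Weitz's choice of potential is essential''). That verification is not routine and is the heart of the matter: the choice $\sqrt{-\varphi'}$ symmetrizes only the univariate $\Delta$-regular recursion, and whether it yields $\sum_{w}\bigl|\partial y_u/\partial y_w\bigr|\le\rho<1$ for irregular degrees and arbitrary boundary values all the way up to $\lambda_c$ is precisely what must be proved; the potentials actually known to work for the hard-core model in the potential-function literature are different (e.g.\ $\Phi'(x)=1/\sqrt{x(1+x)}$), and Weitz's original proof does not proceed through this potential formalism at all. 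Two smaller points also need care: leaves fixed to ``occupied'' correspond to $R=\infty$ (equivalently, deletion of the corresponding factor), which conflicts with the claim that all admissible $R_u$ lie in $[0,\lambda]$ and with the bi-Lipschitz step unless boundary vertices are handled separately; and strong spatial mixing requires decay in the distance to the nearest vertex where the two boundary conditions \emph{differ}, with agreeing fixed vertices allowed arbitrarily close to the root, so the induction must be formulated for such mixed boundaries rather than for two assignments at a uniform depth $\ell$. Without the contraction inequality actually established, what you have is an accurate roadmap of Weitz's proof, not a proof.
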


A similar approach as in Section \ref{sec:partition function} leads to the following result.

\begin{proposition}
Let $0<\lambda<\lambda_c(\Delta)$. We have an $\epsilon n$-approximation algorithm for $\log
Z_I(G,\lambda)$ with query complexity polynomial in $\frac{1}{\epsilon}$ for any graph $G$ with maximum
degree $\Delta+1$. In addition, any $\epsilon n$-approximation algorithm for $\log Z_I(G,\lambda)$ needs $\Omega(1/\epsilon^2)$ queries.
\end{proposition}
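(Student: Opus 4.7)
The plan is to mirror the matching analysis of Sections~\ref{sec:loc}--\ref{sec:partition function}, substituting Weitz's self-avoiding walk tree for the path-tree and Lemma~\ref{strong spatial mixing} for Lemma~\ref{lem:bayati}. First I would fix a total order $\succ$ on $V$ (generated lazily as in Section~\ref{sec:partition function}) and, for each vertex $v$, set up the recursion for $p_{G_v,\lambda}(v)$ on the self-avoiding walk tree $T_{\text{SAW}}(v)$ rooted at $v$, which like the path-tree can be explored by a local DFS using only $\mathcal{D}$ and $\mathcal{N}$ queries. By the strong spatial mixing property of Lemma~\ref{strong spatial mixing} with rate $\delta(l)=O(e^{-\alpha l})$, truncating $T_{\text{SAW}}(v)$ at depth $h=O(\log(1/\epsilon))$ and solving the recursion on the truncation produces $\widehat{p}(v)$ with $|\log\widehat{p}(v)-\log p_{G_v,\lambda}(v)|\leq\epsilon$. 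The truncated tree has at most $(\Delta+1)^h=\mathrm{poly}(1/\epsilon)$ nodes when $\Delta$ is fixed, so the per-vertex query cost is $\overline{\mathcal{Q}}'(\epsilon,\Delta)=\mathrm{poly}(1/\epsilon)$.

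Next I would plug this marginal oracle into the sampling scheme of \textsc{Approx-Partition-Function}, exploiting the identity $\log Z_I(G,\lambda)=\sum_{v\in V}-\log p_{G_v,\lambda}(v)$ recalled in the excerpt. To apply Lemma~\ref{sampling} I first need to check that the summands lie in a fixed interval uniformly in $v$ and in $G$: the recursion for independent sets gives $p_{G_v,\lambda}(v)\geq 1/(1+\lambda)$, so $-\log p_{G_v,\lambda}(v)\in[0,\log(1+\lambda)]$, which is bounded since $\lambda<\lambda_c(\Delta)$ is fixed. Drawing $\lceil C/\epsilon^2\rceil$ vertices uniformly at random, estimating each marginal to additive $\log$-error $\epsilon/2$, and rescaling by $n/s$ then yields an $\epsilon n$-approximation of $\log Z_I(G,\lambda)$ with high probability: the per-vertex errors contribute at most $\epsilon n/2$ to the sum, and Lemma~\ref{sampling} bounds the sampling error by another $\epsilon n/2$. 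The total query complexity is $O(1/\epsilon^2)\cdot\mathrm{poly}(1/\epsilon)=\mathrm{poly}(1/\epsilon)$ as claimed.

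For the $\Omega(1/\epsilon^2)$ lower bound I would reuse the construction in \apref{proof_lowerboundZ} almost verbatim. Build an $n$-vertex graph as a disjoint union of $\lfloor n/2\rfloor$ gadgets consisting of pairs $\{v_{2i-1},v_{2i}\}$, where each pair is joined by an edge with probability $(1+(-1)^s\epsilon)/2$ depending on a hidden bit $s\in\{0,1\}$. For a connected pair $Z_I=1+2\lambda$, while for two isolated vertices $Z_I=(1+\lambda)^2>1+2\lambda$, so the contribution of each gadget to $\log Z_I$ is a Bernoulli-weighted constant $c(\lambda)=\log((1+\lambda)^2/(1+2\lambda))>0$. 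By additivity of $\log Z_I$ over disjoint components, $\mathbb{E}[\log Z_I(G_0,\lambda)]-\mathbb{E}[\log Z_I(G_1,\lambda)]=\Theta(\epsilon n)$ and standard concentration gives $|\log Z_I(G_s,\lambda)-\mathbb{E}[\log Z_I(G_s,\lambda)]|=o(\epsilon n)$ when $\epsilon>C/\sqrt{n}$. Lemma~\ref{probabilistic_lower_bound} combined with Yao's principle then forces $\Omega(1/\epsilon^2)$ queries on average, exactly as in the proof of Theorem~\ref{lowerboundZ}.

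The main obstacle I expect is carrying out the SAW-tree DFS as a \emph{local} computation that only queries the oracles $\mathcal{D}$ and $\mathcal{N}$: unlike the path-tree used for matchings, Weitz's construction introduces conditional boundary assignments (fixed/free) along the self-avoiding path, and one must verify both that these boundary assignments depend only on the path already explored (so they can be generated on the fly) and that the strong spatial mixing of Lemma~\ref{strong spatial mixing} gives an \emph{additive} bound on $\log p_{G_v,\lambda}(v)$ matching the form required by Lemma~\ref{sampling}. Both are standard in Weitz's framework but require explicit translation into the bounded-degree sublinear-query model of the present paper.
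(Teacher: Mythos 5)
Your proposal is correct and follows essentially the same route as the paper: the identity $\log Z_I(G,\lambda)=\sum_{v\in V}-\log p_{G_v,\lambda}(v)$ combined with $\Theta(1/\epsilon^2)$ sampling, per-vertex marginals computed locally on a truncated Weitz self-avoiding-walk tree of depth $O(\log(1/\epsilon))$ using the strong spatial mixing rate, the bound $p_{G_v,\lambda}(v)\geq 1/(1+\lambda)$ to convert additive marginal error into log-error and to bound the summands for Lemma~\ref{sampling}, and the paired-gadget adaptation of the Theorem~\ref{lowerboundZ} construction (here with $Z_I=1+2\lambda$ versus $(1+\lambda)^2$) for the $\Omega(1/\epsilon^2)$ lower bound. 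The only cosmetic difference is that the paper sandwiches the true marginal between the solutions at truncation depths $h$ and $h+1$ via monotonicity before invoking strong spatial mixing, which is one concrete way of settling the boundary-condition issue you flag at the end.
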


\begin{proof}
The algorithm is the following:
\begin{enumerate}
\item take a sampling of $\Theta(1/\epsilon^2)$ vertices;
\item estimate $\log p_{G_v,\lambda}(v)$ with an additive error $\epsilon/2$ for every sampled vertex $v$;
\item compute an estimate of $\log Z_I(G,\lambda)$ from these estimates.
\end{enumerate}
Steps 1 and 3 are the same as in Section~\ref{sec:partition function}, so we focus on Step 2, i.e., to find two bounds $p_0$ and $p_1$ such that $p_0\leq p_{G_v,\lambda}(v)\leq p_1$ and $\log p_1-\log p_0 \leq \epsilon/2$.
We will do this by exploring a constant-size path-tree (also called self-avoiding-walk in \cite{weitz2006counting}) rooted at $v$.

By Lemma~\ref{strong spatial mixing}, there exists some constants $\alpha>0$ such that $\delta(l)=O(e^{-\alpha l})$ is the strong spatial mixing rate.
Let $\epsilon'=\frac{\epsilon}{2(1+\lambda)}$.
Take $h=(\log\frac{1}{\epsilon'})/\alpha+O(1)$ so that $\delta(h)=\epsilon'$.
For the path-tree rooted at $v$ and truncated at level $h$ (resp. level $h+1$), we compute the marginal $p_0$ (resp. $p_1$) at the root $v$ recursively by Equation~(1) in \cite{weitz2006counting}, which takes $O(\Delta^h)$ (resp. $O(\Delta^{h+1})$) queries.
Without loss of generality, we assume that $p_0\leq p_1$.
By the monotonicity argument as for Inequality~\eqref{alternate}, we have $p_0\leq p_{G_v,\lambda}(v)\leq p_1$.
By the property of strong special mixing, $p_1-p_0\leq\delta(h)=\epsilon'$. So $\log p_1-\log p_0\leq \log(1+\frac{\epsilon'}{p_0})\leq \log(1+(1+\lambda)\epsilon')\leq \epsilon/2$, where the second inequality is because $p_0\geq \frac{1}{1+\lambda}$.

The number of queries in our computation is $O(\Delta^{h+1}/\epsilon^2)$, which is polynomial in $1/\epsilon$.
The query lower bound is obtained similarly as Theorem~\ref{lowerboundZ}.
\end{proof}

\begin{remark*}
If $\Delta< 5$, we have $\lambda_c(\Delta)>1$.
Then we can approximate $Z_I(G,1)$, which is the number of independent sets in $G$.
\end{remark*}

\section{Tests on large graphs}\label{sec:tests}

We show the performance of our algorithm on the average size of a matching $E(G,1)$ on large real-world graphs from Stanford large network dataset collection\footnote{see http://snap.stanford.edu/data/index.html}. Our algorithm performs well on both small degree graphs and small average-degree graphs.
The tests are based on:
\begin{itemize}
\item microprocessor: intel core i5 750 (2.67 GHz,  256KB L2/core, 8MB L3)
\item memory: RAM 4 Go
\item compiler: g++ version 4.4.3, option -O2
\item operating system: Linux Ubuntu 10.04
\end{itemize}

\subsection{Small degree graphs}
Consider the three road network graphs from Stanford large network dataset collection, where $\Delta$ is at most $12$. Intersections and endpoints are represented by nodes and the roads connecting these intersections or road endpoints are represented by undirected edges.
\begin{itemize}
\item roadNet-CA : California road network with $n=1965206$, $\Delta=12$

\item roadNet-PA : Pennsylvania road network with $n=1088092$, $\Delta=9$

\item roadNet-TX : Texas road network with $n=1379917$, $\Delta=12$
\end{itemize}
We test the \textsc{Approx-Matching-Statistics} algorithm in Section~\ref{sec:matching statistics} for increasing values of $1/\epsilon$.  The following figure gives the execution time of our program with respect to $1/\epsilon$, where the three curves correspond to the three graphs above.
\begin{figure}
\centering
\scalebox{0.7}{
\begin{tikzpicture}
\begin{axis}[
height=10cm,
width=10cm,
xlabel=$1/\epsilon$,
ylabel=time (sec),
xmin=0,
xmax=300,
ymin=0,
ymax=250,
ytick pos=left
]
\addplot coordinates { 
(10 , 0.0)
(20 , 0.0)
(30 , 0.1)
(40 , 0.2)
(50 , 0.5)
(60 , 0.9)
(70 , 1.4)
(80 , 2.3)
(90 , 3.4)
(100 , 4.8)
(110 , 6.3)
(120 , 8.4)
(130 , 11.2)
(140 , 14.5)
(150 , 18.6)
(160 , 23.4)
(170 , 28.4)
(180 , 34.1)
(190 , 40.3)
(200 , 46.5)
(210 , 53.9)
(220 , 62.7)
(230 , 74.2)
(240 , 86.1)
(250 , 101.0)
(260 , 117.3)
(270 , 131.2)
(280 , 145.4)
(290 , 169.9)
(300	, 205.1)
};
\addlegendentry{roadNet-CA}

\addplot coordinates { 
(10 , 0.0)
(20 , 0.0)
(30 , 0.1)
(40 , 0.2)
(50 , 0.3)
(60 , 0.6)
(70 , 1.0)
(80 , 1.4)
(90 , 2.1)
(100 , 2.8)
(110 , 4.2)
(120 , 5.1)
(130 , 6.5)
(140 , 8.1)
(150 , 10.2)
(160 , 12.3)
(170 , 15.1)
(180 , 18.6)
(190 , 22.3)
(200 , 25.5)
(210 , 31.0)
(220 , 35.5)
(230 , 40.1)
(240 , 48.7)
(250 , 60.0)
(260 , 67.5)
(270 , 75.5)
(280 , 84.4)
(290 , 92.7)
(300 , 100.7)
};
\addlegendentry{roadNet-PA}

\addplot coordinates { 
(10 , 0.0)
(20 , 0.0)
(30 , 0.0)
(40 , 0.1)
(50 , 0.2)
(60 , 0.4)
(70 , 0.6)
(80 , 0.9)
(90 , 1.4)
(100 , 1.8)
(110 , 2.4)
(120 , 3.1)
(130 , 4.2)
(140 , 5.3)
(150 , 6.8)
(160 , 8.2)
(170 , 10.2)
(180 , 12.2)
(190 , 14.2)
(200 , 16.9)
(210 , 19.2)
(220 , 22.9)
(230 , 26.0)
(240 , 28.6)
(250 , 32.1)
(260 , 38.6)
(270 , 41.9)
(280 , 46.1)
(290 , 51.7)
(300 , 56.9)
};
\addlegendentry{roadNet-TX}
\end{axis}
\end{tikzpicture}
}
\end{figure}

\subsection{Small average-degree graphs}
Now we extend our algorithm to graphs with small average degree, since these graphs are of much practical interest in the real world.
For a large degree node $v$ in such a graph, it takes too many queries for the $\textsc{Approx-Marginal}$ algorithm to output an $\epsilon$-approximation of marginal distribution.
The solution is to compute a less accurate approximation of the marginal distribution for large degree nodes.
Notice that in a graph with small average degree, the number of large degree nodes is limited.
So the approximation performance in general should be good, as shown in the tests later.

In the algorithm, we take a constant number (e.g., 400 in our program) of sampled nodes, and for every node $v$ in the sampling, we limit the time (and thus the number of queries) for estimating $p_{G,1}(v)$ and get an estimate with error possibly larger than $\epsilon$.
More precisely, we calculate $x_v^1(v), x_v^2(v),\cdots$, and stop if the accumulated time of these calculations exceeds one second.
Let $x_v^k(v)$ be the last estimate of $p_{G,1}(v)$ computed by the algorithm.
By Inequality~\eqref{alternate}, the exact value of $p_{G,1}(v)$ is between $x_v^{k-1}(v)$ and $x_v^{k}(v)$.
Thus $x_v^k(v)$ is an estimate of $p_{G,1}(v)$ with error at most $|x_v^k(v)-x_v^{k-1}(v)|$. Let this error be $\epsilon_v$.
Define $\bar{\epsilon}$ to be the average value of $\epsilon_v$ for all $v$ in the sampling.
From the estimates of $p_{G,1}(v)$ on sampled nodes, we get an estimate of $E(G,1)$, which is an $(\bar{\epsilon}+0.05)$-approximation of $E(G,1)$ with high probability.

Below is the performance of our algorithm on
\emph{Brightkite-edges}\footnote{This was once a location-based social networking service provider where users shared their locations by checking-in; the friendship network was collected using their public API. In this graph, $n=58228$, average degree=3.7, $\Delta=1134$.} and
\emph{CA-CondMat}\footnote{This is a collaboration network of Arxiv Condensed Matter category; there is an edge if authors coauthored at least one paper. In this graph, $n=23133$, average degree=8.1, $\Delta=280$.}
from Stanford large network dataset collection.

\begin{itemize}
\item \emph{Brightkite-edges}:
    We get an estimate 32626 of $E(G,1)$ in 12.9 seconds with $\bar{\epsilon}=0.0016$.
\item \emph{CA-CondMat}:
    We get an estimate 17952 of $E(G,1)$ in 287.5 seconds with $\bar{\epsilon}=0.0291$.
\end{itemize}

\fi

\end{document}